\newtheorem{theorem}{Theorem}
\newtheorem{corollary}{Corollary}
\newtheorem{lemma}{Lemma}
\newtheorem{assumption}{Assumption}
\numberwithin{theorem}{section}
\numberwithin{lemma}{section}
\numberwithin{equation}{section}
\numberwithin{proposition}{section}
\numberwithin{corollary}{section}
\numberwithin{assumption}{section}
\newcommand{\req}[1]{Eq.\,(\ref{#1})}
\begin{document} 
\title[A homogenization theorem for Langevin systems ]{A homogenization theorem for Langevin systems with an application to Hamiltonian dynamics}

\author[Birrell]{Jeremiah Birrell}
\address{Department of Mathematics\\
University of Arizona\\
Tucson, AZ, 85721, USA}
\email{jbirrell@math.arizona.edu}

\author[Wehr]{Jan Wehr}
\address{Department of Mathematics and Program in Applied Mathematics\\
University of Arizona\\
Tucson, AZ, 85721, USA}
\email{wehr@math.arizona.edu}

\subjclass{ 60H10, 82C31}

\keywords{homogenization, stochastic differential equation, Hamiltonian system,  small mass limit, noise-induced drift}

\date{\today}

\begin{abstract}
This paper studies homogenization of stochastic differential systems.  The standard example of this phenomenon is the small mass limit of Hamiltonian systems.  We consider this case first from the heuristic point of view, stressing the role of detailed balance and presenting the heuristics based on a multiscale expansion.  This is used to propose a physical interpretation of recent results by the authors, as well as to motivate a new theorem proven here.  Its main content is a sufficient condition, expressed in terms of solvability of an associated partial differential equation (``the cell problem''), under which the homogenization limit of an SDE is calculated explicitly.  The general theorem is applied to a class of systems, satisfying a generalized detailed balance condition with a position-dependent temperature.
\end{abstract}
\maketitle

\section{Introduction and background}

This paper studies the small mass limit of a general class of Langevin equations.  Langevin dynamics is defined in terms of canonical variables---positions and momenta---by adding damping and (It\^o) noise terms to Hamiltonian equations.  In the limit when the mass, or masses, of the system's particles, go to zero, the momenta homogenize, and one obtains a limiting equation for the position variables only.  This is a great simplification which often allows one to see the nature of the dynamics more clearly.  If the damping matrix of the original system depends on its state, a {\it noise-induced drift} arises in the limit.  We analyze and interpret this term from several points of view.  The paper consists of four parts.  The first part contains general background on stochastic differential equations.  In the second part, the small-mass limit of Langevin equations is studied using a multiscale expansion.  This method requires making additional assumptions, but it leads to correct results in all cases in which rigorous proofs are known. The third part presents a new rigorous result about homogenization of a general class of singularly perturbed SDEs. The final part applies this result to prove a theorem about the homogenization of a large class of Langevin systems.

\subsection{Stochastic differential equations}

Let us start from a general background on Langevin equations.  The material presented here is not new, and its various versions can be found in many textbooks, see for example \cite{khasminskii2011stochastic}. We do not strive for complete  precision or a  listing of all necessary assumptions in our discussions here. The aim of the first two sections is to motivate and facilitate reading the remainder of the paper. Detailed technical considerations will be reserved for Sections 3 and 4, where we present our new results.

Consider the stochastic differential equation 
\begin{align}\label{SDE}
dy_t = b(y_t)\,dt + \sigma(y_t)\,dW_t.
\end{align}
The process $y_t$ takes values in $\mathbb{R}^m$, $b$ is a vector field in $\mathbb{R}^m$, $W$ is an $n$-dimensional Wiener process and $\sigma$ is an $m \times n$-matrix-valued function.  Define an $m \times m$ matrix $\Sigma$ by $\Sigma =  \sigma\sigma^T$.  The equation \req{SDE} defines a Markov process with the infinitesimal operator 
\begin{align}\label{generator}
(Lf)(y) = {1 \over 2}\Sigma_{ij}\partial_i\partial_j f + b_i\partial_i f    
\end{align}
where we are writing $\partial_i$ for $\partial_{y_i}$ and suppressing the dependence of $\Sigma$, $b_i$ and $f$ on $y$ from the notation.  Summation over repeating indices is implied.  

We assume that this process has a unique stationary probability measure with a $C^2$-density $h(y)$.   Under this assumption $h$ satisfies the equation
\begin{align} \label{stationaryFP}
L^*h = 0
\end{align}
where $L^*$ denotes the formal adjoint of $L$,
\begin{align} \label{adjoint_generator}
L^*f = {1 \over 2}\partial_i\partial_j\left(\Sigma_{ij}f\right) - \partial_i\left(b_i f\right).
\end{align}
That is, we have 
\begin{align}\label{stationaryFP_explicit}
\partial_i\left({1 \over 2}\partial_j\left(\Sigma_{ij}h\right) - b_ih\right) = 0.
\end{align}

Consider the special case when $h$ solves the equation
\begin{align}\label{inner}
{1 \over 2}\partial_j\left(\Sigma_{ij}h\right) - b_ih = 0 .
\end{align}
In this case the operator $L$ is symmetric on the space $L^2\left(\mathbb{R}^m, h\right)\equiv L^2_h$ of square-integrable functions with the weight $h$, as the following calculation shows.  Using product formula, we have
\begin{equation}
\int\left(Lf\right)gh = \int fL^*\left(gh\right) = \int f\partial_i\left({1 \over 2}\partial_j\left(\Sigma_{ij}gh\right) - b_igh\right).
\end{equation}
The expression in parentheses equals
\begin{equation}
{1 \over 2}\partial_j g \left(\Sigma_{ij}h\right) + g{1 \over 2}\partial_j\left(\Sigma_{ij}h\right) - gb_ih = {1 \over 2}\partial_j g \left(\Sigma_{ij}h\right)
\end{equation}
by \req{inner}.  Applying product formula again, we obtain
\begin{equation}
\int\left(Lf\right)gh = \int f\left({1 \over 2}\Sigma_{ij}\left(\partial_i\partial_jg\right)h + {1 \over 2}\partial_i\left(\Sigma_{ij}h\right)\partial_jg\right)
\end{equation}
which, by another application of \req{inner}, equals
\begin{equation}
\int f\left({1 \over 2}\Sigma_{ij}\partial_i\partial_jg - b_j\partial_jg\right)h = \int f\left(Lg\right)h.
\end{equation}
Here is a more complete discussion:
\subsection{Detailed balance condition and symmetry of the infinitesimal operator}
We have
\begin{align}
\int\left(Lf\right)gh &= \int\left(\frac{1}{2}\Sigma_{ij}\partial_i\partial_jf + b_i\partial_if\right)gh \\
&=-\frac{1}{2}\int\partial_if\partial_j\left(\Sigma_{ij}gh\right) + \int\left(\partial_if\right)b_igh \notag\\
&= -\frac{1}{2}\int\partial_if\left[\partial_j\left(\Sigma_{ij}h\right)g + \Sigma_{ij}h\partial_jg\right] + \partial\left(\partial_if\right)b_igh \notag\\
&=\int\partial_if\left[-\frac{1}{2}\partial_j\left(\Sigma_{ij}h\right) + b_ih\right]g - \frac{1}{2}\int \Sigma_{ij}\partial_if\partial_jgh. \notag
\end{align}
Interchanging the roles of $f$ and $g$ and canceling the term symmetric in $f$ and $g$, we obtain
\begin{equation}
\int\left(Lf\right)gh - \int f\left(Lg\right)h = \int\left[\left(\partial_if\right)g - \left(\partial_ig\right)f\right]\left(-{1 \over 2}\partial_j\left(\Sigma_{ij}h\right) + b_ih\right).
\end{equation}
If $h$ is a solution to the equation
\begin{equation}
-{1 \over 2}\partial_j\left(\Sigma_{ij}h\right) + b_ih = 0
\end{equation}
then the above expression is zero, showing that the operator $L$ is symmetric on the space $L^2_h$.  Conversely, for this symmetry to hold, the $\mathbb{R}^m$-valued function  $-{1 \over 2}\partial_j\left(\Sigma_{ij}h\right) + b_ih$ has to be orthogonal to all elements of the space $L^2$ (of functions with values in  $\mathbb{R}^m$) of the form $\left(\partial_if\right)g - \left(\partial_ig\right)f$.  It is not hard to prove that every $C^1$ function with this property must vanish, and thus, that $\frac{1}{2}\partial_j\left(\Sigma_{ij}h\right) - b_ih = 0$.  Here is a sketch of a proof:  suppose $\phi$ is $C^1$ and  orthogonal to all such functions.  That is, for every  $f$ and $g$, 
\begin{equation}
\int \left[\phi_i \left(\partial_i f\right)g - \phi_i\left(\partial_i g\right)f\right] = 0.
\end{equation}
Integrating the first term by parts we obtain
\begin{equation}
\int\left[-\left(\partial_i\phi_i\right)g - 2\phi_i\partial_i g\right]f = 0.
\end{equation}
Since this holds for all $f$, it follows that 
\begin{equation}
-\left(\partial_i\phi_i\right)g - 2\phi_i\partial_i g = 0
\end{equation}
and thus also 
\begin{equation}
\int\left[-\left(\partial_i\phi_i\right)g - 2\phi_i\partial_i g\right] = 0.
\end{equation}
Integrating the second term by parts, we get
\begin{equation}
\int\left(\partial_i\phi_i\right)g = 0
\end{equation}
and, since this is true for every $g$, it follows that $\partial_i\phi_i$ vanishes.  We thus have, for every $g$
\begin{equation}
\phi_i \partial_ig = 0
\end{equation}
and this implies that $\phi$ vanishes. In summary:

{\bf Proposition:}  If the density $h$ of the stationary probability measure is $C^2$, then $h$ satisfies the stationary Fokker-Planck equation
\begin{equation}
\partial_i\left[-{1 \over 2}\partial_j\left(\Sigma_{ij}h\right) + b_ih\right] = 0.
\end{equation}
The stronger statement
\begin{equation}
-{1 \over 2}\partial_j\left(\Sigma_{ij}h\right) + b_ih = 0
\end{equation}
is equivalent to symmetry of the operator $L$ on the space $L^2_h$.

We are now going to relate the above symmetry statement to the detailed balance property of the stationary dynamics.
First, it is clearly equivalent to the analogous property for the backward Kolmogorov semigroup:
\begin{equation}
\int\left(P_tf\right)gh = \int f\left(P_tg\right)h
\end{equation}
since $P_t = \exp\left(tL\right)$.
Now, $\left(P_tf\right)(x)$ is the expected value of $f(x_t)$ for the process, starting at $x$ at time 0.  In particular, for $f = \delta_y$, we obtain $P_tf(x) = p_t(x,y)$---the density of the transition probability from $x$ to $y$ in time $t$.  Using the above symmetry of $P_t$ with $f = \delta_y$ and $g = \delta_x$, we obtain the detailed balance condition:
\begin{equation}
h(x) p_t(x,y) = h(y) p_t(y, x)
\end{equation}
which, conversely, implies the symmetry statement for arbitrary $f$ and $g$.
 \subsection{The case of a linear drift and constant noise}

When both $b(y)$ and $\sigma(y)$ are constant or depend linearly on $y$,  \req{SDE} can be solved explicitly \cite{arnold} and an explicit formula for its stationary distribution can be found, when it exists.  We consider the special case $b(y) = - \gamma y$ and $\sigma(y) \equiv \sigma$, where $\gamma$ and $\sigma$ are constant matrices and  the eigenvalues of $\gamma$ have positive real parts. The stationary Fokker-Planck equation, \req{stationaryFP_explicit}, reads
\begin{equation}
\nabla \cdot \left({1 \over 2}\Sigma\nabla h + (\gamma y)h\right) = 0
\end{equation}
where $\Sigma = \sigma\sigma^T$.
It has a Gaussian solution
\begin{equation}
h(y) = \left(2\pi\right)^{-{m \over 2}}\left(\det M\right)^{-{1 \over 2}}\exp\left(-{1 \over 2}\left(M^{-1}y, y\right)\right)
\end{equation}
with the covariance matrix $M$ which is the unique solution of the Lyapunov equation
\begin{equation}
\gamma M + M\gamma^T = \Sigma
\end{equation}
and can be written as (see, for example, \cite{ortega2013matrix})
\begin{equation}
M = \int_0^{\infty}\exp\left(-t\gamma\right)\Sigma\exp\left(-t\gamma^T\right)dt.
\end{equation}
This result can be verified by a direct calculation.  We emphasize that it holds without assuming the detailed balance condition.  The latter condition is satisfied if and only if
the above $h$ solves the equation
\begin{equation}
{1 \over 2}\Sigma\nabla h + (\gamma y)h = 0
\end{equation}
which is equivalent to $M = {1 \over 2}\gamma^{-1}\Sigma$ or, in terms of the coefficients of the system, to 
\begin{align}\label{condition_forDB}
\Sigma\gamma^T = \gamma \Sigma
\end{align}
To see the physical significance of this condition, let us go back to the general case and write (adapting the discussion in  \cite{Zwanzig} to our notation)
\begin{equation}
\gamma = {1 \over 2}\Sigma M^{-1} - i\Omega.
\end{equation}
$\Omega$ represents the ``oscillatory degrees of freedom'' of the diffusive system.  The above calculations show that the detailed balance condition is equivalent to $\Omega = 0$, in agreement with the physical intuition that there are no macroscopic currents in the stationary state.
\section{Small mass limit---a perturbative approach}\label{sec:perturb}

We are now going to apply the general facts about Langevin equations to a model of a mechanical system, interacting with a noisy environment.  The dynamical variables of this system are positions and momenta, and, in general, the Langevin equations which describe its time evolution, are not linear.  However, when investigating the small mass limit of the system by a perturbative method, we will encounter equations closely related to those studied above.  This will be explained later, when we interpret the limiting equations. 

Consider a mechanical system with the Hamiltonian $\mathcal{H}(q, p)$ where $q, p \in \mathbb{R}^n$.  We want to study a small mass limit of this system, coupled to a damping force and the noise.  Therefore, we introduce the variable $z = {p \over \sqrt{m}}$ and assume the Hamiltonian can be written  $\mathcal{H}(q,p) =H(q,z)$ where $H$ is independent of $m$. We thus have
\begin{align}\label{Langevin}
dq_t &= {1 \over \sqrt{m}}\nabla_zH(q_t, z_t)\,dt \\
dz_t &= -{1 \over \sqrt{m}}\nabla_qH(q_t, z_t)\,dt - {1 \over m}\gamma(q_t)\nabla_zH(q_t, z_t)\,dt + {1 \over \sqrt{m}}\sigma(q_t)\,dW_t .\notag
\end{align}
$\gamma$ is $n \times n$-matrix-valued, $\sigma$ is $n \times k$-matrix-valued and $W$ is a $k$-dimensional Wiener process.  We emphasize that $\sigma$ does not play here the same role that it played in our discussion of the general Langevin equation, since the noise term enters only the equation for $dz_t$.  The number $k$ of the components of the driving noise does not have to be related to the dimension of the system in any particular way.  The corresponding backward Kolmogorov equation for a function $\rho(q, z, t)$ is
\begin{equation}
\partial_t \rho = L\rho
\end{equation}
where the differential operator $L$ equals
\begin{equation}
L = {1 \over m}L_1 + {1 \over \sqrt{m}}L_2
\end{equation}
with 
\begin{align}
L_1 &= {1 \over 2}\Sigma\nabla_z \cdot \nabla_z - \gamma \nabla_zH \nabla_z \\
L_2 &= \nabla_zH \cdot \nabla_q - \nabla_qH \cdot \nabla_z \notag
\end{align}
where $\Sigma(q) = \sigma(q)\sigma(q)^T$.
We represent the solution of the Kolmogorov equation as a formal series 
\begin{equation}
\rho = \rho_0 + \sqrt{m}\rho_1 + m\rho_2 + \dots
\end{equation}
Equating the expressions, proportional to $m^{-1}$, $m^{-{1 \over 2}}$ and $m^0$, we obtain the equations:
\begin{align}
L_1 \rho_0 &= 0, \\
L_1 \rho_1 &= -L_2\rho_0, \notag\\
\partial_t \rho_0 &= L_1\rho_2 + L_2\rho_1. \notag
\end{align}
To satisfy the first equation it is sufficient to choose $\rho_0$ which does not depend on $z$:
\begin{equation}
\rho_0 = \rho_0(q,t).
\end{equation}
If we now search for $\rho_1$ which is linear in $z$, the second equation simplifies to
\begin{equation}
\gamma \nabla_z H \cdot \nabla_z \rho_1 = \nabla_z H \cdot \nabla_q \rho_0
\end{equation}
which has a solution 
\begin{equation}
\rho_1(q,z) = \left(\gamma^{-1}\right)^T\nabla_q \rho_0\cdot z = \nabla_q\rho_0 \cdot \gamma^{-1}z.
\end{equation}
Writing the third equation as
\begin{equation}
\partial_t \rho_0 - L_2\rho_1 = L_1\rho_2
\end{equation}
and applying the identity 
\begin{equation}
Ran L_1 = \left(Ker L_1^*\right)^{\perp}
\end{equation}
to the space $L^2$ with respect to the $z$ variable,
we see that $\partial_t \rho_0 - L_2\rho_1 = L_1\rho_2$ must be orthogonal in this space to any function $h$ in the null space of $L_1^*$.  We have
\begin{align}\label{nullspace}
L_1^*h = \nabla_z\cdot\left({1 \over 2}\Sigma\nabla_zh + \left(\gamma\nabla_zH\right)h\right)
\end{align}
where $\Sigma = \sigma\sigma^T$.

It is impossible to continue the analysis without further, simplifying assumptions.  We are first going to study the case of a general $H$, assuming a form of the detailed balance condition in the variable $z$, at fixed $q$.  

{\bf Assumption 1:} for every $q$ there exists a nonnegative solution of the equation
\begin{align}\label{conditionalDB}
{1 \over 2}\Sigma\nabla_zh + \left(\gamma\nabla_zH\right)h = 0 
\end{align}
of finite  $L^1(dz)$-norm.  We can thus choose
\begin{equation}
\int h(q,z)\,dz = 1.
\end{equation}
We will say in this case that the system satisfies the {\it conditional detailed balance property} in the variable $z$. 
Since $\rho_0$ does not depend on $z$, the orthogonality condition can be written as
\begin{equation}
\partial_t\rho_0 = \int L_2\rho_1(q,z)h(q,z)\,dz.
\end{equation}
We have the following explicit formula for $L_2\rho_1$ (summation over repeated indices is implied):
\begin{align}
L_2\rho_1=& \partial_{z_i}H\left(\partial_{q_i}\partial_{q_j}\rho_0\right)\left(\gamma^{-1}\right)_{jk}z_k + \partial_{z_i}H\left(\partial_{q_j}\rho_0\right)\partial_{q_i}\left(\left(\gamma^{-1}\right)_{jk}\right)z_k\\
& - \partial_{q_i}H\left(\partial_{q_j}\rho_0\right)\left(\gamma^{-1}\right)_{ji}. \notag
\end{align}
To integrate it against $h(q,z)$, we will use the following consequence of \req{conditionalDB}
\begin{align}\label{averaging}
&\int\left(\partial_{z_i}H\right)z_kh(q,z)\,dz = -{1 \over 2}\int\left(\gamma^{-1}\Sigma\nabla_zh\right)_iz_k\,dz\\
 =& -\int\left(\gamma^{-1}\Sigma\right)_{ij}\left(\partial_{z_j}h\right)z_k\,dz= -{1 \over 2}\left(\gamma^{-1}\Sigma\right)_{ij}\int\left(\partial_{z_j}h\right)z_k\,dz \notag\\
=& {1 \over 2}\left(\gamma^{-1}\Sigma\right)_{ij}\int h\delta_{jk}\,dz = {1 \over 2}\left(\gamma^{-1}\Sigma\right)_{ik}.\notag
\end{align}
The orthogonality condition is thus
\begin{align}\partial_t \rho_0 =& -\left(\gamma^{-1}\right)_{ji}\left<\partial_{q_i}H\right>\partial_{q_j}\rho_0 + {1 \over 2}\left(\gamma^{-1}\Sigma\right)_{ik}\partial_{q_i}\left(\left(\gamma^{-1}\right)_{jk}\right)\partial_{q_j}\rho_0\\
&+ {1 \over 2}\left(\gamma^{-1}\Sigma\right)_{ik}\left(\gamma^{-1}\right)_{jk}\left(\partial_{q_i}\partial_{q_j}\rho_0\right).\notag
\end{align}
In this formula, which is more general than the detailed-balance case of the rigorous result of \cite{Hottovy}, $\left<-\right>$ denotes the average (i.e. the integral over $z$ with the density $h(q,z)$).  This notation is used only in the term in which the average has not been calculated explicitly.  Passing from the Kolmogorov equation to the corresponding SDE, we obtain the effective Langevin equation in the $m \to 0$ limit:
\begin{align}\label{effective}
dq_t = -\gamma(q_t)^{-1}\left(\left<\nabla_qH\right>(q_t) + S(q_t)\right)\,dt  + \gamma^{-1}(q_t)\sigma(q_t)\, dW_t
\end{align}
where the components of the noise-induced drift, $S(q)$, are given by
\begin{equation}
S_i(q) = {1 \over 2}\left(\gamma^{-1}\Sigma\right)_{jk}\partial_{q_j}\left(\left(\gamma^{-1}\right)_{ik}\right)
\end{equation}
and we have used
\begin{equation}
\gamma^{-1}\sigma\left(\gamma^{-1}\sigma\right)^T = \gamma^{-1}\Sigma\left(\gamma^{-1}\right)^T.
\end{equation}
We are now going to interpret the limiting equation \req{effective}, using the stationary probability measure $h(q,z)\,dz$, as follows:  from the original equations for $q_t$ and $z_t$ we obtain
\begin{equation}
dq_t = -\gamma(q_t)^{-1}\nabla_qH\,dt + \gamma(q_t)^{-1}\sigma(q_t)\,dW_t -\sqrt{m}\gamma(q_t)^{-1}\,dz_t.
\end{equation}
Integrating the last term by parts, we obtain
\begin{equation}
\sqrt{m}\left(\gamma^{-1}(q_t)\right)_{ij}\,dz_t^j = d\left(\sqrt{m}\left(\gamma_{ij}^{-1}(q_t)\right)z_t^j\right)-\sqrt{m}\,d\left(\left(\gamma^{-1}\right)_{ij}\right)z_t^j.
\end{equation}
We leave the first term out, since, under fairly general natural assumptions, it is of order $m^{1 \over 2}$  \cite{Hottovy}.  The second term equals
\begin{equation}
-\partial_{q_k}\left(\left(\gamma^{-1}\right)_{ij}\right)\left(\partial_{z_k}H\right)z_j\,dt.
\end{equation}
We substitute this into the equation for $dq_t$ and average, multiplying by $h(q,z)$ and integrating over $z$.  The calculation is as in \req{averaging} and the result is thus the same as the equation obtained by the multiscale expansion \req{effective}. This provides the following heuristic physical interpretation of the perturbative result:  the smaller $m$ is, the faster the variation of $z$ becomes, and in the limit $m \to 0$, $z$ homogenizes instantaneously, with  $q$ changing only infinitesimally.  

Let us now discuss conditions, under which one may expect our conditional detailed balance assumption to hold.  As seen above, at fixed $q$ this assumption is equivalent to existence of a non-negative, integrable solution of the equation 
\begin{align}\label{condDB}
{1 \over 2}\Sigma \nabla_zh + \gamma\left(\nabla_zH\right)h = 0.
\end{align}
This equation can be rewritten as
\begin{equation}
{\nabla_zh \over h} = -2\Sigma^{-1}\gamma\nabla_zH.
\end{equation}
The left-hand side equals $\nabla_z \log h$.  Letting $B = -2\Sigma^{-1}\gamma$ to simplify notation, we see that a necessary condition for existence of a solution is that $B\nabla_zH$ be a gradient.  This requires
\begin{equation}
\partial_{z_k}\left(b_{ij}\partial_{z_j}H\right) = \partial_{z_i}\left(b_{kj}\partial_{z_j}H\right)
\end{equation}
for all $i, k$, where $b_{ij}$ are matrix elements of $B$.  Introducing the matrix $R = \left(r_{ij}\right)$ of second derivatives of $H$,
\begin{equation}
r_{ij} = \partial_{z_i}\partial_{z_j}H
\end{equation}
we see that solvability of \req{condDB}  is equivalent to symmetry of the product $BR$:
\begin{equation}
BR = RB^T.
\end{equation}
For the system to satisfy the conditional detailed balance property, this relation has to be satisfied for all $q$ and $z$.  When $H$ is a quadratic function of $z$, the matrix $R$ is constant.  Even though in this case we will derive the limiting equation withouth assuming conditional detailed balance, let us remark that the above approach provides a method of determining when that condition holds, different from that used earlier.  Namely, let
\begin{equation}
H(q,z) = V(q) + {1 \over 2}Q(q)z\cdot z
\end{equation}
where $Q(q)$ is a symmetric matrix.  We then have $R = Q$ and the solvability condition becomes 
\begin{equation}
BQ = QB^T.
\end{equation}
In a still more special---but the most fequently considered---case when $Q$ is a multiple of identity, this reduces to
\begin{equation}
B = B^T
\end{equation}
which is easily seen to be equivalent to the relation
\begin{equation}
\gamma \Sigma = \Sigma\gamma^T.
\end{equation}
We have derived this condition earlier by a different argument \req{condition_forDB}.

If $\gamma$ is symmetric, this becomes the commutation relation
\begin{equation}
\gamma \Sigma = \Sigma\gamma.
\end{equation}
Note that if $\gamma \Sigma = \Sigma\gamma^T$, the solution of the Lyapunov equation 
\begin{equation}
J\gamma^T + \gamma J = \Sigma
\end{equation}
is given by $J = {1 \over 2}\gamma^{-1}\Sigma$.  In this the case the linear Langevin equation in the $z$ variable, whose conditional equilibrium at fixed value of $q$ we are studying, has no ``oscillatory degrees of freedom'', as discussed earlier (see also \cite{Zwanzig}).

In the case when $H$ is not a quadratic function of $z$, the matrix $BR(q,z)$ has to be symmetric for all $q$ and $z$, which means satisfying a continuum of conditions for every fixed $q$.    It is interesting to ask whether there exist physically natural examples in which this happens, without each $B(q)$ being a multiple of identity.  We are not going to pursue this question here.  

In the case when $B(q)$ is a multiple of identity, we can write 
\begin{equation}
A = 2\beta(q)^{-1}\gamma
\end{equation}
with $\beta(q)^{-1} = k_BT(q)$ and call the scalar function $T(q)$ {\it generalized temperature}.  The limiting Kolmogorov equation reads then
\begin{equation}
\partial_t \rho_0 = -\left(\gamma^{-1}\right)_{ji}\left<\partial_{q_i}H\right>\partial_{q_j}\rho_0 +k_BT\partial_{q_k}\left(\gamma^{-1}\right)_{jk}\partial_{q_j}\rho_0+ k_BT\left(\gamma^{-1}\right)_{ij}\left(\partial_{q_i}\partial_{q_j}\rho_0\right)
\end{equation}
and the components of the noise-induced drift are thus
\begin{equation}
S_j(q) = k_BT\partial_{q_k}\left(\gamma^{-1}\right)_{jk}.
\end{equation}
The above  applies in particular in the one-dimensional case, in which $\sigma(q)^2$ and $\gamma(q)$ are scalars and hence one is always an ($q$-dependent) multiple of the other:
\begin{align}
k_BT(q) = {\sigma(q)^2 \over 2\gamma(q)}.
\end{align}
The limiting Langevin equation is in this case
\begin{equation}
dq_t = - {\left<\nabla_qH\right> \over \gamma}\,dt - {1 \over 2}{\nabla_q \gamma \over \gamma^3}\sigma^2\,dt + {\sigma \over \gamma}\,dW_t.
\end{equation}
For a Hamiltonian equal to a sum of potential and quadratic kinetic energy, $H = V(q) + {z^2 \over 2}$, the first term equals $F \over \gamma$, where $F = -\nabla_qV\,dt$, in agreement with earlier results. 

The second situation, in which the perturbative treatment of the original system can be carried out explicitly is the general quadratic kinetic energy case.
\medskip

{\bf Assumption 2}:  $H = V(q) + {z^2 \over 2}$

If we follow the singular perturbation method used above, we again need to find the integral \req{averaging}, where $\partial_{z_i}H = z_i$.  In this case we know the solution of $L_1^*h = 0$ explicitly:
\begin{equation}
h(q,z) = \left(2\pi\right)^{-{n \over 2}}\left(\det M\right)^{-{1 \over 2}}\exp\left(-{1 \over 2}M^{-1}z\cdot z\right)
\end{equation}
so the integral in \req{averaging} is the mean of $z_iz_k$ in the Gaussian distribution with the covariance $M = \left(m_{ik}\right)$, that is, $m_{ik}$.  The second-order term in the Kolmogorov equation is thus $m_{ik}\left(\gamma^{-1}\right)_{jk}\partial_{q_i}\partial_{q_j}\rho_0$.  The corresponding Langevin equation, which has been derived rigorously in \cite{Hottovy} is in this case
\begin{align}
dq_t = -\gamma^{-1}(q_t)\nabla_q V(q_t)\,dt + S(q_t)\,dt + \gamma^{-1}(q_t)\sigma(q_t)\,dW_t.
\end{align}
The homogenization heuristics proposed under Assumption 1 applies here as well:  the limiting Langevin equation can be interpreted as a result of averaging over the conditional stationary distribution of the $z$ variable.  A rigorous result, corroborating this picture has recently been proven in \cite{clt}.

\section{A rigorous homogenization theorem}

We now develop a framework for the  homogenization of Langevin equations that is able to make many of the heuristic results from the previous two sections rigorous. Our results will concern Hamiltonians of the form
\begin{align}\label{H_form}
H(t,x)=K(t,q,p-\psi(t,q))+V(t,q)
\end{align}
where $x=(q,p)\in\mathbb{R}^n\times \mathbb{R}^n$, $K = K(t,q,z)$ and $V= V(t,q)$ are $C^2$, $\mathbb{R}$-valued functions, $K$ is non-negative, and $\psi$ is a $C^2$, $\mathbb{R}^n$-valued function.  The splitting of $H$ into $K$ and $V$ does not have to correspond physically to any notion of kinetic and potential energy, although we will use those terms for convenience.  The splitting is not unique; it will be constrained further as we continue.
We now define the family of scaled Hamiltonians, parameterized by $\epsilon>0$ (generalizing the above mass parameter):
\begin{align}
H^\epsilon(t,q,p)\equiv K^\epsilon(t,q,p)+V(t,q)\equiv K(t,q,(p-\psi(t,q))/\sqrt{\epsilon})+V(t,q).
\end{align}

Consider the following family of SDEs:\\
\begin{align}
dq^\epsilon_t=&\nabla_p H^\epsilon(t,x^\epsilon_t)dt,\label{Hamiltonian_SDE_q}\\
d p^\epsilon_t=&(-\gamma(t,x^\epsilon_t)\nabla_p H^\epsilon(t,x^\epsilon_t)-\nabla_q H^\epsilon(t,x^\epsilon_t)+F(t,x^\epsilon_t))dt+\sigma(t,x^\epsilon_t) dW_t,\label{Hamiltonian_SDE_p}
\end{align}
where $\gamma:[0,\infty)\times\mathbb{R}^{2n}\rightarrow\mathbb{R}^{n\times n}$ and $\sigma:[0,\infty)\times\mathbb{R}^{2n}\rightarrow\mathbb{R}^{n\times k}$ are continuous, $\gamma$ is positive definite, and $W_t$ is a $\mathbb{R}^k$-valued Brownian motion on a filtered probability space $(\Omega,\mathcal{F},\mathcal{F}_t,P)$ satisfying the usual conditions \cite{karatzas2014brownian}.   

Our objective in this section is to develop a method for investigating the behavior of $x^\epsilon_t$ in the limit $\epsilon\rightarrow 0^+$; more precisely, we wish to  prove the existence of a limiting ``position" process $q_t$ and derive a homogenized SDE that it satisfies.  In fact, the method we develop is  applicable to a more general class SDEs  that share certain properties with \req{Hamiltonian_SDE_q}-\req{Hamiltonian_SDE_p}.  In the following subsection, we discuss some prior results concerning \req{Hamiltonian_SDE_q}-\req{Hamiltonian_SDE_p}.  This will help motivate the assumptions made in the development of our general homogenization method, starting in subsection \label{sec:gen_homog}.

\subsection{Summary of prior results}
Let $x_t^\epsilon$ be a family of solutions to the SDE \req{Hamiltonian_SDE_q}-\req{Hamiltonian_SDE_p} with  initial condition $x_0^\epsilon=(q_0^\epsilon,p_0^\epsilon)$. We assume that a solution exists for all $t\geq 0$ (i.e. there are no explosions).  See  Appendix B in \cite{BirrellHomogenization} for assumptions that guarantee this. Under Assumptions  1-3 in \cite{BirrellHomogenization} (repeated as Assumptions \ref{assump1}-\ref{assump3} in  \ref{app:assump}, we showed that for any  $T>0$, $p>0$, $0<\beta<p/2$ we have
\begin{align}\label{results_summary}
\sup_{t\in[0,T]}E\left[\|p_t^\epsilon-\psi(t,q^\epsilon_t)\|^p\right]=O(\epsilon^{p/2}) \text{ and } E\left[\sup_{t\in[0,T]}\|p_t^\epsilon-\psi(t,q^\epsilon_t)\|^p\right]=O(\epsilon^{\beta})
\end{align}
as $\epsilon\rightarrow 0^+$ i.e. the point $(p,q)$ is attracted to the surface defined by $p=\psi(t,q)$.

Adding Assumption 4 (Assumption \req{assump4} in the appendix) we also showed that
\begin{align}\label{q_eq}
d(q_t^\epsilon)^i=&(\tilde\gamma^{-1})^{ij}(t,q_t^\epsilon)(-\partial_t\psi_j(t,q_t^\epsilon)-\partial_{q^j}V(t,q_t^\epsilon)+F_j(t,x^\epsilon_t))dt\\
&+(\tilde\gamma^{-1})^{ij}(t,q_t^\epsilon)\sigma_{j\rho}(t,x_t^\epsilon)dW^\rho_t-(\tilde\gamma^{-1})^{ij}(t,q_t^\epsilon)\partial_{q^j}K(t,q_t^\epsilon,z_t^\epsilon)dt\notag\\
&+(z_t^\epsilon)_j\partial_{q^l}(\tilde\gamma^{-1})^{ij}(t,q_t^\epsilon)\partial_{z_l}K(t,q_t^\epsilon,z_t^\epsilon)dt- d((\tilde\gamma^{-1})^{ij}(t,q_t^\epsilon)(u^\epsilon_t)_j)\notag\\
&+(u_t^\epsilon)_j\partial_t(\tilde\gamma^{-1})^{ij}(t,q_t^\epsilon)dt,\notag
\end{align}
where $u_t^\epsilon\equiv p_t^\epsilon-\psi(t,q^\epsilon_t)$, $z_t^\epsilon\equiv u_t^\epsilon/\sqrt{\epsilon}$,  and
 \begin{align}\label{tilde_gamma_def}
\tilde\gamma_{ik}(t,q)\equiv\gamma_{ik}(t,q) +\partial_{q^k}\psi_i(t,q)-\partial_{q^i}\psi_k(t,q).
\end{align}
We define the components of $\tilde\gamma^{-1}$ such that 
\begin{align}\label{tilde_gamma_inv_def}
(\tilde\gamma^{-1})^{ij}\tilde\gamma_{jk}=\delta^i_k,
\end{align}
and for any $v_i$ we define $(\tilde\gamma^{-1}v)^i=(\tilde\gamma^{-1})^{ij}v_j$.

Under the additional Assumptions 5-7 in \cite{BirrellHomogenization}, which include further restrictions on the form of the Hamiltonian, we were then able to show that $q_t^\epsilon$ converges in an $L^p$-norm as $\epsilon\rightarrow 0^+$ to the solution of a lower dimensional SDE,
\begin{align}\label{limit_eq}
dq_t=&\tilde \gamma^{-1}(t,q_t)(-\partial_t\psi(t,q_t)-\nabla_{q}V(t,q_t)+F(t,q_t,\psi(t,q_t)))dt+S(t,q_t)dt\notag\\
&+\tilde\gamma^{-1}(t,q_t)\sigma(t,q_t,\psi(t,q_t)) dW_t.
\end{align}
The   {\em noise-induced drift}  term, $S(t,q)$,  that arises in the limit is the term of greatest interest here. Its form is given in Eq. (3.26) in \cite{BirrellHomogenization}.  

The homogenization technique used in \cite{BirrellHomogenization} to arrive at \req{limit_eq} relies heavily on the specific structural assumptions on the form of the Hamiltonian.  Those assumptions cover a wide variety of important systems, such as a particle in an electromagnetic field, and motion on a Riemannian manifold, but it is desirable to search for a more generally applicable homogenization method.  In this paper, we develop a significantly more general  technique, adapted from the methods presented in \cite{pavliotis2008multiscale}, that is capable of homogenizing  terms of the form $G(t,q_t^\epsilon,(p_t^\epsilon-\psi(t,q_t^\epsilon))/\sqrt{\epsilon})dt$
for a general class of SDEs that satisfy the property \req{results_summary}, as well as prove convergence of $q_t^\epsilon$ to the solution of a limiting, homogenized SDE. In particular, it will be capable of homogenizing $q_t^\epsilon$ from the Hamiltonian system \req{Hamiltonian_SDE_q}-\req{Hamiltonian_SDE_p} under less restrictive assumptions on the form of the Hamiltonian, than those made in \cite{BirrellHomogenization}.  We emphasize that the convergence statements are proven in the strong sense, see Section \ref{sec:gen_homog}.

\subsection{General homogenization framework}\label{sec:gen_homog}
Here we describe our homogenization technique in a more general context  than the Hamiltonian setting from the previous section. This method is related to the cell problem method from \cite{pavliotis2008multiscale}, our proof applies to a larger class of SDEs and demonstrates $L^p$-convergence rather that weak convergence.

We will denote an element of $\mathbb{R}^{n}\times\mathbb{R}^m$ by $x=(q,p)$, where we no longer require the $q$ and $p$ degrees of freedom to have the same dimensionality, though we still employ the convention of writing $q$ indices with superscripts and $p$ indices with subscripts.  We let $W_t$ be an $\mathbb{R}^k$-valued Wiener process, $\psi:[0,\infty)\times\mathbb{R}^n\rightarrow\mathbb{R}^m$ be $C^2$ and $G_1,F_1:[0,\infty)\times\mathbb{R}^{n+m}\times\mathbb{R}^m\rightarrow\mathbb{R}^n$, $G_2,F_2:[0,\infty)\times\mathbb{R}^{n+m}\times\mathbb{R}^m\rightarrow\mathbb{R}^m$, $\sigma_1:[0,\infty)\times\mathbb{R}^{n+m}\rightarrow\mathbb{R}^{n\times k}$, and $\sigma_2:[0,\infty)\times\mathbb{R}^{n+m}\rightarrow\mathbb{R}^{m\times k}$ be continuous. With these definitions, we consider the following family of SDEs, depending on a parameter $\epsilon>0$:
\begin{align}
dq^\epsilon_t=&\left(\frac{1}{\sqrt{\epsilon}}G_1(t,x_t^\epsilon,z_t^\epsilon)+F_1(t,x_t^\epsilon,z_t^\epsilon)\right)dt+\sigma_1(t,x_t^\epsilon)dW_t,\label{gen_SDE1}\\
d p^\epsilon_t=&\left(\frac{1}{\sqrt{\epsilon}}G_2(t,x^\epsilon_t,z_t^\epsilon)+F_2(t,x^\epsilon_t,z_t^\epsilon)\right)dt+\sigma_2(t,x^\epsilon_t) dW_t,\label{gen_SDE2}
\end{align}
where we define $z_t^\epsilon=(p_t^\epsilon-\psi(t,q_t^\epsilon))/\sqrt{\epsilon}$. We will assume, in analogy with \req{results_summary}, that:
\begin{assumption}\label{homog_assump1}
For any  $T>0$, $p>0$, $0<\beta<p/2$ we have
\begin{align}
\sup_{t\in[0,T]}E\left[\|p_t^\epsilon-\psi(t,q^\epsilon_t)\|^p\right]=O(\epsilon^{p/2}) \text{ and } E\left[\sup_{t\in[0,T]}\|p_t^\epsilon-\psi(t,q^\epsilon_t)\|^p\right]=O(\epsilon^{\beta})
\end{align}
as $\epsilon\rightarrow 0^+$.
\end{assumption}
In  words, we assume that the $p$ degrees of freedom are attracted to the values defined by $p=\psi(t,q)$.  This is an appropriate setting to expect some form of homogenization, as it suggests that the  dynamics in the limit $\epsilon\rightarrow 0^+$ can be characterized by fewer degrees of freedom---the $q$-variables.

\subsubsection{Homogenization of integral processes}
In this section we derive a method capable of homogenizing processes of the form 
\begin{equation}\label{M_t_def}
M^\epsilon_t\equiv \int_0^tG(s,x_s^\epsilon,z_s^\epsilon)ds
\end{equation}
 in the limit $\epsilon\rightarrow 0^+$.  More specifically, our aim is to find conditions under which there exists some function, $S(t,q)$, such that
\begin{align}\label{homog_goal}
\int_0^tG(s,x_s^\epsilon,z_s^\epsilon)ds-\int_0^tS(s,q_s^\epsilon)ds\rightarrow 0
\end{align}
in some norm, as $\epsilon\rightarrow 0^+$ , i.e. only the $q$-degrees of freedom are needed to characterize $M_t^\epsilon$ in the limit.  We will call  a family of processes, $S(t,q_t^\epsilon)dt$, that satisfies such a limit, a homogenization of $G(t,x_t^\epsilon,z_t^\epsilon)dt$.    The technique we develop will also be useful for proving existence of a limiting process $q_s$ (i.e. $q_s^\epsilon\rightarrow q_s$), and  showing that
\begin{align}
\int_0^tG(s,x_s^\epsilon,z_s^\epsilon)ds\rightarrow \int_0^tS(s,q_s)ds.
\end{align}
as $\epsilon\rightarrow 0^+$. We will consider this second question in Section \ref{sec:limit_eq}. Here, our focus is on  \req{homog_goal}.

 As a starting point, let $\chi(t,x,z):[0,\infty)\times\mathbb{R}^{n+m}\times\mathbb{R}^m\rightarrow\mathbb{R}$ be $C^{1,2}$, where $C^{1,2}$ is defined as follows:
\begin{itemize}
\item If $\sigma_1\neq 0$ then we take this to mean $\chi$ is $C^1$ and, for each $t$, $\chi(t,x,z)$ is $C^2$ in $(x,z)$  with second derivatives continuous jointly in all variables.
\item  If $\sigma_1=0$ then  we take this to mean $\chi$ is $C^1$ and, for each $t,q$,  $\chi(t,q,p,z)$ is $C^2$ in $(p,z)$ with second derivatives continuous jointly in all variables.
\end{itemize}

  Eventually, we will need to carefully choose $\chi$ so that we  achieve our aim, but for now we simply use It\^o's  formula to compute $\chi(t,x_t^\epsilon,z_t^\epsilon)$. We defined $C^{1,2}$ precisely so that It\^o's formula is justified. For this computation, we will define $\chi^\epsilon(t,x)=\chi(t,x,(p-\psi(t,q))/\sqrt{\epsilon})$, and
\begin{align}\label{sigma_defs}
&\Sigma_{11}^{ij}= \sum_\rho (\sigma_1)^i_\rho(\sigma_1)^j_\rho,\hspace{1mm}  (\Sigma_{12})^i_j= \sum_\rho (\sigma_1)^i_\rho(\sigma_2)_{j\rho}, \hspace{1mm}(\Sigma_{22})_{ij}= \sum_\rho (\sigma_2)_{i\rho}(\sigma_2)_{j\rho}.
\end{align}
It\^o's  formula gives
\begin{align}
&\chi(t,x_t^\epsilon,z_t^\epsilon)=\chi(0,x_0^\epsilon,z_0^\epsilon)+\int_0^t\partial_s \chi^\epsilon(s,x_s^\epsilon)ds+\int_0^t\nabla_q\chi^\epsilon(s,x_s^\epsilon)\cdot dq_s^\epsilon\\
&+\int_0^t\nabla_p\chi^\epsilon(s,x_s^\epsilon)\cdot dp_s^\epsilon+\frac{1}{2}\int_0^t\partial_{q^i}\partial_{q^j}\chi^\epsilon(s,x_s^\epsilon) \Sigma_{11}^{ij}(s,x_s^\epsilon)ds\notag\\
&+\frac{1}{2}\int_0^t\partial_{q^i}\partial_{p_j}\chi^\epsilon(s,x_s^\epsilon) (\Sigma_{12})^i_j(s,x_s^\epsilon)ds+\frac{1}{2}\int_0^t\partial_{p_i}\partial_{q^j}\chi^\epsilon(s,x_s^\epsilon) (\Sigma_{12})_i^j(s,x_s^\epsilon)ds\notag\\
&+\frac{1}{2}\int_0^t\partial_{p_i}\partial_{p_j}\chi^\epsilon(s,x_s^\epsilon) (\Sigma_{22})_{ij}(s,x_s^\epsilon)ds.\notag
\end{align}
Note that if $\sigma_1=0$ then only the second derivatives that we have assumed exist are involved in this computation.

We can compute these terms as follows:
\begin{align}
\partial_t \chi^\epsilon(t,x)=&\partial_t\chi(t,x,z)-\partial_{z_i}\chi(t,x,z) \partial_t\psi_i(t,q)/\sqrt{\epsilon},\\
\partial_{q^i}\chi^\epsilon(t,x)=&(\partial_{q^i}\chi)(t,x,z)-\epsilon^{-1/2}\partial_{q^i}\psi_k(t,q)(\partial_{z_k}\chi)(t,x,z),\\
\partial_{p_i}\chi^\epsilon(t,x)=&(\partial_{p_i}\chi)(t,x,z)+\epsilon^{-1/2}(\partial_{z_i}\chi)(t,x,z),\\
\partial_{q^i}\partial_{q^j}\chi^\epsilon(t,x)=&(\partial_{q^i}\partial_{q^j}\chi)(t,x,z)+\epsilon^{-1/2}\left(-\partial_{q^j}\psi_k(t,q)(\partial_{q^i}\partial_{z_k}\chi)(t,x,z)\right.\\
&\left.-\partial_{q^i}\partial_{q^j}\psi_k(t,q)(\partial_{z_k}\chi)(t,x,z)-\partial_{q^i}\psi_k(t,q)(\partial_{q^j}\partial_{z_k}\chi)(t,x,z)\right)\notag\\
&+\epsilon^{-1}\partial_{q^i}\psi_k(t,q)\partial_{q^j}\psi_l(t,q)(\partial_{z_k}\partial_{z_l}\chi)(t,x,z).\notag\\
\partial_{p_i}\partial_{p_j}\chi^\epsilon(t,x)=&(\partial_{p_i}\partial_{p_j}\chi)(t,x,z)+\epsilon^{-1/2}\left((\partial_{z_j}\partial_{p_i}\chi)(t,x,z)\right.\\
&\left.+(\partial_{p_j}\partial_{z_i}\chi)(t,x,z)\right)+\epsilon^{-1}(\partial_{z_i}\partial_{z_j}\chi)(t,x,z),\notag\\
\partial_{q^i}\partial_{p_j}\chi^\epsilon(t,x)=&(\partial_{q^i}\partial_{p_j}\chi)(t,x,z)+\epsilon^{-1/2}\left((\partial_{q^i}\partial_{z_j}\chi)(t,x,z)\right.\\
&\left.-\partial_{q^i}\psi_k(t,q)(\partial_{p_j}\partial_{z_k}\chi)(t,x,z)\right)\notag\\
&-\epsilon^{-1}\partial_{q^i}\psi_k(t,q)(\partial_{z_j}\partial_{z_k}\chi)(t,x,z),\notag
\end{align}
where $z$ is evaluated at $z(t,x,\epsilon)=(p-\psi(t,q))/\sqrt{\epsilon}$ in each of the above formulae.

Using these expressions, together with the SDE \req{gen_SDE1}-\req{gen_SDE2} we find
\begin{align}
&\chi(t,x_t^\epsilon,z_t^\epsilon)=\chi(0,x_0^\epsilon,z_0^\epsilon)+\int_0^t\partial_s \chi(s,x_s^\epsilon,z_s^\epsilon)-\epsilon^{-1/2} (\partial_{z_i}\chi)(s,x_s^\epsilon,z_s^\epsilon) \partial_s\psi_i(s,q_s^\epsilon)ds\\
&+\int_0^t\left((\partial_{q^i}\chi)(s,x_s^\epsilon,z_s^\epsilon)-\epsilon^{-1/2}\partial_{q^i}\psi_k(s,q_s^\epsilon)(\partial_{z_k}\chi)(s,x_s^\epsilon,z_s^\epsilon)\right) \notag\\
&\hspace{10mm}\times\left[\left(\frac{1}{\sqrt{\epsilon}}G_1(s,x_s^\epsilon,z_s^\epsilon)+F_1(s,x_s^\epsilon,z_s^\epsilon)\right)ds+\sigma_1(s,x_s^\epsilon)dW_s\right]^i\notag\\
&+\int_0^t\left((\partial_{p_i}\chi)(s,x_s^\epsilon,z_s^\epsilon)+\epsilon^{-1/2}(\partial_{z_i}\chi)(s,x_s^\epsilon,z_s^\epsilon)\right) \notag\\
&\hspace{10mm}\times\left [\left(\frac{1}{\sqrt{\epsilon}}G_2(s,x^\epsilon_s,z_s^\epsilon)+F_2(s,x^\epsilon_s,z_s^\epsilon)\right)ds+\sigma_2(s,x^\epsilon_s) dW_s\right]_i\notag\\
&+\frac{1}{2}\int_0^t\Sigma_{11}^{ij}(s,x_s^\epsilon)\bigg[(\partial_{q^i}\partial_{q^j}\chi)(s,x_s^\epsilon,z_s^\epsilon)+\epsilon^{-1/2}\left(-\partial_{q^j}\psi_k(s,q_s^\epsilon)(\partial_{q^i}\partial_{z_k}\chi)(s,x_s^\epsilon,z_s^\epsilon)\right.\notag\\
&\hspace{15mm}\left.-\partial_{q^i}\partial_{q^j}\psi_k(s,q_s^\epsilon)(\partial_{z_k}\chi)(s,x_s^\epsilon,z_s^\epsilon)-\partial_{q^i}\psi_k(s,q_s^\epsilon)(\partial_{q^j}\partial_{z_k}\chi)(s,x_s^\epsilon,z_s^\epsilon)\right)\notag\\
&\hspace{15mm}+\epsilon^{-1}\partial_{q^i}\psi_k(s,q_s^\epsilon)\partial_{q^j}\psi_l(s,q_s^\epsilon)(\partial_{z_k}\partial_{z_l}\chi)(s,x_s^\epsilon,z_s^\epsilon)\bigg] ds\notag\\
&+\int_0^t(\Sigma_{12})^i_j(s,x_s^\epsilon)\bigg[(\partial_{q^i}\partial_{p_j}\chi)(s,x_s^\epsilon,z_s^\epsilon)+\epsilon^{-1/2}\left((\partial_{q^i}\partial_{z_j}\chi)(s,x_s^\epsilon,z_s^\epsilon)\right.\notag\\
&\hspace{10mm}\left.-\partial_{q^i}\psi_k(t,q)(\partial_{p_j}\partial_{z_k}\chi)(s,x_s^\epsilon,z_s^\epsilon)\right)-\epsilon^{-1}\partial_{q^i}\psi_k(s,q_s^\epsilon)(\partial_{z_j}\partial_{z_k}\chi)(s,x_s^\epsilon,z_s^\epsilon)\bigg]ds\notag\\
&+\frac{1}{2}\int_0^t (\Sigma_{22})_{ij}(s,x_s^\epsilon)\bigg[(\partial_{p_i}\partial_{p_j}\chi)(s,x_s^\epsilon,z_s^\epsilon)+\epsilon^{-1/2}\left((\partial_{z_j}\partial_{p_i}\chi)(s,x_s^\epsilon,z_s^\epsilon)\right.\notag\\
&\hspace{15mm}\left.+(\partial_{p_j}\partial_{z_i}\chi)(s,x_s^\epsilon,z_s^\epsilon)\right)+\epsilon^{-1}(\partial_{z_i}\partial_{z_j}\chi)(s,x_s^\epsilon,z_s^\epsilon)\bigg]ds.\notag
\end{align}
Multiplying by $\epsilon$ and collecting powers, we arrive at
\begin{align}\label{homog_eq}
&\int_0^t(L\chi)(s,x_s^\epsilon,z_s^\epsilon)ds=\epsilon^{1/2} (R_1^\epsilon)_t+\epsilon\left(\chi(t,x_t^\epsilon,z_t^\epsilon)-\chi(0,x_0^\epsilon,z_0^\epsilon)+ (R^\epsilon_2)_t\right),
\end{align}
where we define
\begin{align}\label{L_def}
(L\chi)(t,x,z)=&\bigg(\frac{1}{2}\Sigma_{11}^{ij}(t,x)\partial_{q^i}\psi_k(t,q)\partial_{q^j}\psi_l(t,q)\\
&\hspace{5mm}-(\Sigma_{12})^i_l(t,x)\partial_{q^i}\psi_k(t,q)+\frac{1}{2} (\Sigma_{22})_{kl}(t,x)\bigg)(\partial_{z_k}\partial_{z_l}\chi)(t,x,z)\notag\\
&+\left((G_2)_k(t,x,z)-\partial_{q^i}\psi_k(t,q)G^i_1(t,x,z)\right)(\partial_{z_k}\chi)(t,x,z),\notag
\end{align}
\begin{align}\label{R1_def}
&(R_1^\epsilon)_t\\
=&\int_0^t (\partial_{z_i}\chi)(s,x_s^\epsilon,z_s^\epsilon) \partial_s\psi_i(s,q_s^\epsilon)ds-\int_0^t(\partial_{q^i}\chi)(s,x_s^\epsilon,z_s^\epsilon)G^i_1(s,x_s^\epsilon,z_s^\epsilon)ds\notag\\
&+\int_0^t\partial_{q^i}\psi_k(s,q_s^\epsilon)(\partial_{z_k}\chi)(s,x_s^\epsilon,z_s^\epsilon)\left[F_1(s,x_s^\epsilon,z_s^\epsilon)ds+\sigma_1(s,x_s^\epsilon)dW_s\right]^i\notag\\
&-\int_0^t(\partial_{z_i}\chi)(s,x_s^\epsilon,z_s^\epsilon) \left [F_2(s,x^\epsilon_s,z_s^\epsilon)ds+\sigma_2(s,x^\epsilon_s) dW_s\right]_i\notag\\
&-\int_0^t(\partial_{p_i}\chi)(s,x_s^\epsilon,z_s^\epsilon)(G_2)_i(s,x^\epsilon_s,z_s^\epsilon)ds\notag\\
&-\frac{1}{2}\int_0^t\Sigma_{11}^{ij}(s,x_s^\epsilon)\left(-\partial_{q^j}\psi_k(s,q_s^\epsilon)(\partial_{q^i}\partial_{z_k}\chi)(s,x_s^\epsilon,z_s^\epsilon)\right.\notag\\
&\hspace{8.5mm}\left.-\partial_{q^i}\partial_{q^j}\psi_k(s,q_s^\epsilon)(\partial_{z_k}\chi)(s,x_s^\epsilon,z_s^\epsilon)-\partial_{q^i}\psi_k(s,q_s^\epsilon)(\partial_{q^j}\partial_{z_k}\chi)(s,x_s^\epsilon,z_s^\epsilon)\right)ds\notag\\
&-\int_0^t(\Sigma_{12})^i_j(s,x_s^\epsilon)\left((\partial_{q^i}\partial_{z_j}\chi)(s,x_s^\epsilon,z_s^\epsilon)-\partial_{q^i}\psi_k(t,q)(\partial_{p_j}\partial_{z_k}\chi)(s,x_s^\epsilon,z_s^\epsilon)\right)ds\notag\\
&-\int_0^t (\Sigma_{22})_{ij}(s,x_s^\epsilon)(\partial_{z_j}\partial_{p_i}\chi)(s,x_s^\epsilon,z_s^\epsilon)ds,\notag
\end{align}
and
\begin{align}\label{R2_def}
& (R^\epsilon_2)_t\\
=&-\int_0^t\partial_s \chi(s,x_s^\epsilon,z_s^\epsilon)ds\notag\\
&-\int_0^t(\partial_{q^i}\chi)(s,x_s^\epsilon,z_s^\epsilon)\left[F_1(s,x_s^\epsilon,z_s^\epsilon)ds+\sigma_1(s,x_s^\epsilon)dW_s\right]^i\notag\\
&-\int_0^t(\partial_{p_i}\chi)(s,x_s^\epsilon,z_s^\epsilon) \left [F_2(s,x^\epsilon_s,z_s^\epsilon)ds+\sigma_2(s,x^\epsilon_s) dW_s\right]_i\notag\\
&-\frac{1}{2}\int_0^t\Sigma_{11}^{ij}(s,x_s^\epsilon)(\partial_{q^i}\partial_{q^j}\chi)(s,x_s^\epsilon,z_s^\epsilon)ds\notag\\
&-\int_0^t(\Sigma_{12})^i_j(s,x_s^\epsilon)(\partial_{q^i}\partial_{p_j}\chi)(s,x_s^\epsilon,z_s^\epsilon)ds\notag\\
&-\frac{1}{2}\int_0^t (\Sigma_{22})_{ij}(s,x_s^\epsilon)(\partial_{p_i}\partial_{p_j}\chi)(s,x_s^\epsilon,z_s^\epsilon)ds.\notag
\end{align}

First, think of simply homogenizing \req{M_t_def} to a quantity of the form $\int_0^t\tilde G(s,x_s^\epsilon)ds$.  Suppose we have a candidate for $\tilde G$.  If we can find a $C^{1,2}$ solution, $\chi$, to the PDE
\begin{align}
(L\chi)(t,x,z)=G(t,x,z)-\tilde G(t,x)
\end{align}
then substituting this into \req{homog_eq} gives
\begin{align}\label{homog_eq2}
&\int_0^t G(s,x_s^\epsilon,z_s^\epsilon)ds-\int_0^t \tilde G(s,x_s^\epsilon)ds\\
=&\epsilon^{1/2}(R_1^\epsilon)_t+\epsilon\left(\chi(t,x_t^\epsilon,z_t^\epsilon)-\chi(0,x_0^\epsilon,z_0^\epsilon)+ (R^\epsilon_2)_t\right).\notag
\end{align}
Given sufficient  growth bounds for $\chi$ and its derivatives, one anticipates that the right hand side of \req{homog_eq2} vanishes in the limit.  If in 
addition, $\tilde G$ is Lipschitz in $p$, uniformly in $(t,q)$, then, based on Assumption \ref{homog_assump1},  one expects
\begin{align}
\int_0^t G(s,x_s^\epsilon,z_s^\epsilon)ds- \int_0^t\tilde G(s,q_s^\epsilon,\psi(s,q_s^\epsilon))ds\rightarrow 0
\end{align}
as $\epsilon\rightarrow 0^+$.

We make this informal discussion precise in Theorem \ref{homog_thm}, below.  For this, we will need the following assumptions:
\begin{assumption}\label{homog_assump2}
For all $T>0$, the following quantities are polynomially bounded in $z$, with the bounds uniform on $[0,T]\times \mathbb{R}^{n+m}$:\\
$G_1$, $F_1$, $G_2$, $F_2$, $\sigma_{1}$, $\sigma_{2}$, $\partial_t\psi$, $\partial_{q^i}\psi$, $\partial_{q^i}\partial_{q^j}\psi$. If $\sigma_1=0$ then we can remove the requirement on  $\partial_{q^i}\partial_{q^j}\psi$.
\end{assumption}
Recall that an $\mathbb{R}^l$-valued function, $\phi(t,x,z)$, is called {\em polynomially bounded} in $z$, uniformly on $[0,T]\times \mathbb{R}^{n+m}$ if there exists $q,C>0$ such that
\begin{equation}
\|\phi(t,x,z)\|\leq C(1+\|z\|^q)
\end{equation}
for all $(t,x,z)\in[0,T]\times \mathbb{R}^{n+m}\times\mathbb{R}^m$.  In particular, if $\phi$ is independent of $z$, this just means it is bounded on $[0,T]\times \mathbb{R}^{n+m}$. Applying this to $\psi$, we note that Assumption \ref{homog_assump2} implies  $\psi$ is Lipschitz in $q$, uniformly in $t\in[0,T]$.

\begin{assumption}\label{homog_assump3}
 Given a continuous $G:[0,\infty)\times\mathbb{R}^{n+m}\times\mathbb{R}^m\rightarrow\mathbb{R}$, assume that there exists a $C^{1,2}$ function $\chi:[0,\infty)\times\mathbb{R}^{n+m}\times\mathbb{R}^m\rightarrow\mathbb{R}$ and a continuous function $\tilde G(t,x):[0,\infty)\times\mathbb{R}^{n+m}\rightarrow\mathbb{R}$ that together satisfy the PDE 
\begin{align}\label{chi_eq}
(L\chi)(t,x,z)= G(t,x,z)-\tilde G(t,x),
\end{align}
where the differential operator, $L$, is defined in \req{L_def}. 

Assume  that, for a given $T>0$, $\tilde G$ is Lipschitz in $p$, uniformly for $(t,q)\in[0,T]\times\mathbb{R}^n$. Also suppose that  $\chi$, its first derivatives, and the second derivatives $\partial_{q^i}\partial_{q^j}\chi$, $\partial_{q^i}\partial_{p_j}\chi$, $\partial_{q^i}\partial_{z_j}\chi$, $\partial_{p_i}\partial_{p_j}\chi$, and $\partial_{p_i}\partial_{z_j}\chi$ are polynomially bounded in $z$, uniformly for $(t,x)\in[0,T]\times\mathbb{R}^{n+m}$.   If $\sigma_1=0$ then the only second derivatives that we require to be polynomially bounded are $\partial_{p_i}\partial_{p_j}\chi$ and $\partial_{p_i}\partial_{z_j}\chi$.

\end{assumption}

\begin{theorem}\label{homog_thm}
  Fix $T>0$. Let Assumptions \ref{homog_assump1}-\ref{homog_assump3}  hold and $x_t^\epsilon=(q_t^\epsilon,p_t^\epsilon)$ satisfy the SDE \req{gen_SDE1}-\req{gen_SDE2}.   Then for any  $p>0$ we have
\begin{align}
E\left[\sup_{t\in[0,T]}\left|\int_0^t G(s,x_s^\epsilon,z_s^\epsilon)ds-\int_0^t \tilde G(s,x_s^\epsilon)ds\right|^p\right]=O(\epsilon^{p/2}).
\end{align}
and
\begin{align}
E\left[\sup_{t\in[0,T]}\left|\int_0^t G(s,x_s^\epsilon,z_s^\epsilon)ds-\int_0^t \tilde G(s,q_s^\epsilon,\psi(s,q_s^\epsilon))ds\right|^p\right]=O(\epsilon^{p/2})
\end{align}
as $\epsilon\rightarrow 0^+$.
\end{theorem}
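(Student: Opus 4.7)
The plan is to exploit the identity \req{homog_eq2}, which decomposes $\int_0^t G\,ds - \int_0^t \tilde G\,ds$ into three contributions---$\epsilon^{1/2}(R_1^\epsilon)_t$, the boundary term $\epsilon[\chi(t,x_t^\epsilon,z_t^\epsilon)-\chi(0,x_0^\epsilon,z_0^\epsilon)]$, and $\epsilon(R_2^\epsilon)_t$---and to estimate each in $L^p(\Omega)$ after taking a supremum over $t\in[0,T]$, with target rates $O(\epsilon^{p/2})$, $O(\epsilon^{p/2})$, and $O(\epsilon^p)$ respectively. As a preliminary, I would translate Assumption \ref{homog_assump1} into moment bounds for $z_t^\epsilon=(p_t^\epsilon-\psi(t,q_t^\epsilon))/\sqrt{\epsilon}$: dividing by $\epsilon^{r/2}$ yields the ``strong'' bound $\sup_{t\in[0,T]} E[\|z_t^\epsilon\|^r]=O(1)$ for every $r>0$, and the ``weak'' bound $E[\sup_{t\in[0,T]}\|z_t^\epsilon\|^r]=O(\epsilon^{-\delta})$ for any $\delta>0$ (by selecting $\beta$ in Assumption \ref{homog_assump1} arbitrarily close to $r/2$).

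Next, I would show $E[\sup_{t\in[0,T]}|(R_i^\epsilon)_t|^p]=O(1)$ for $i=1,2$. Examining \req{R1_def} and \req{R2_def}, every summand is either a Lebesgue or an It\^o integral whose integrand is polynomially bounded in $z$: the coefficients $G_j, F_j, \sigma_j, \partial\psi, \partial^2\psi$ by Assumption \ref{homog_assump2}, and exactly the $\chi$-derivatives that appear by Assumption \ref{homog_assump3} (the terms involving $\Sigma_{11}$ and $\Sigma_{12}$ drop out when $\sigma_1=0$, explaining the relaxed form of that assumption in that case). For Lebesgue integrals, Jensen/H\"older pushes $\sup_t$ and the $p$-th power inside the time integral, reducing the expectation to the strong moment bound. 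For It\^o integrals, Burkholder-Davis-Gundy gives $E[\sup_t|\int_0^t h\,dW_s|^p]\leq C_p E[(\int_0^T\|h\|^2\,ds)^{p/2}]$, and a further Jensen/H\"older step (treating $p\geq 2$ and $0<p<2$ separately) again reduces to the strong moment bound. Multiplying by the prefactors $\epsilon^{1/2}$ and $\epsilon$ gives the claimed $O(\epsilon^{p/2})$ and $O(\epsilon^p)$ contributions.

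The decisive step is the boundary contribution. From the polynomial growth of $\chi$ in $z$, $|\chi(t,x,z)|\leq C(1+\|z\|^q)$, whence $E[\sup_t|\chi(t,x_t^\epsilon,z_t^\epsilon)|^p]\leq C'(1+E[\sup_t\|z_t^\epsilon\|^{pq}])=O(\epsilon^{-\delta})$ for arbitrary $\delta>0$ by the weak bound. Choosing $0<\delta<p/2$ gives $\epsilon^p\cdot O(\epsilon^{-\delta})=O(\epsilon^{p-\delta})=O(\epsilon^{p/2})$. The triangle inequality applied to \req{homog_eq2} then yields the first assertion. For the second assertion, I would insert $\pm\tilde G(s,q_s^\epsilon,\psi(s,q_s^\epsilon))$, bound the extra difference by the uniform-in-$(s,q)$ Lipschitz continuity of $\tilde G$ in $p$, and apply H\"older's inequality together with the \emph{first} part of Assumption \ref{homog_assump1}---which furnishes $\sup_s E[\|p_s^\epsilon-\psi(s,q_s^\epsilon)\|^p]=O(\epsilon^{p/2})$---to obtain the additional $O(\epsilon^{p/2})$ error.

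The main technical obstacle is precisely the boundary term: it is the only place where the supremum over $t$ must sit \emph{inside} the expectation together with a polynomially growing function of $z_t^\epsilon$, forcing reliance on the weaker $E[\sup_t\|\cdot\|^p]$ half of Assumption \ref{homog_assump1} and its $0<\beta<p/2$ gap. The $\epsilon$ prefactor leaves just enough room to absorb an arbitrarily small loss $\delta<p/2$, and the whole scheme works because of this margin. Everything else amounts to careful bookkeeping, checking that each of the many summands in \req{R1_def} and \req{R2_def} involves only those $\chi$-derivatives whose polynomial control is guaranteed by Assumption \ref{homog_assump3}.
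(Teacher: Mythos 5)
Your proposal is correct and follows essentially the same route as the paper's proof: decompose via \req{homog_eq2}, reduce the $R_i^\epsilon$ terms to the strong moment bound $\sup_t E[\|z_t^\epsilon\|^{rp}]=O(1)$ via H\"older and Burkholder--Davis--Gundy, and absorb the weaker $E[\sup_t\|z_t^\epsilon\|^{rp}]=O(\epsilon^{-\delta})$ bound on the boundary $\chi$-term using the $\epsilon^p$ prefactor, then transfer to $\tilde G(s,q_s^\epsilon,\psi(s,q_s^\epsilon))$ by the Lipschitz property and the first half of Assumption \ref{homog_assump1}. The only cosmetic differences are that the paper first establishes the result for $p\geq 2$ and then appeals to H\"older for $0<p<2$, and that it chooses $\delta=p/2$ exactly rather than $\delta<p/2$; both are immaterial.
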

\begin{proof}
Fix $T>0$. First let $p\geq 2$. \req{homog_eq2} gives
\begin{align}
&E\left[\sup_{t\in[0,T]}\left|\int_0^t G(s,x_s^\epsilon,z_s^\epsilon)ds-\int_0^t \tilde G(s,x_s^\epsilon)ds\right|^p\right]\\
\leq &3^{p-1}\left( \epsilon^{p/2}E\left[\sup_{t\in[0,T]}|(R_1^\epsilon)_t|^p\right]+2^p\epsilon^pE\left[\sup_{t\in[0,T]}|\chi(t,x_t^\epsilon,z_t^\epsilon)|^p\right]\right.\notag\\
&\left.\hspace{1cm}+\epsilon^p E\left[\sup_{t\in[0,T]}|(R^\epsilon_2)_t|^p\right]\right).\notag
\end{align}

From \req{R1_def} and \req{R2_def} we see that $R_1^\epsilon$ and $R_2^\epsilon$ have the forms
\begin{align}
(R_i^\epsilon)_t=\int_0^t V_i(s,x_s^\epsilon,z_s^\epsilon) ds+\int_0^t Q_{ij}(s,x_s^\epsilon,z_s^\epsilon) dW^j_s,
\end{align}
where $V_i$ and $Q_{ij}$ are linear combinations of products of (components of) one or more terms from the following list:\\
 $G_1$, $F_1$, $G_2$, $F_2$, $\sigma_1$, $\sigma_2$, $\partial_t\psi$, $\partial_{q^i}\psi$, $\partial_{q^i}\partial_{q^j}\psi$, $\partial_t\chi$, $\partial_{q^i}\chi$, $\partial_{z_i}\chi$, $\partial_{p_i}\chi$, $\partial_{q^i}\partial_{q^j}\chi$, $\partial_{q^i}\partial_{p_j}\chi$, $\partial_{q^i}\partial_{z_j}\chi$, $\partial_{p_i}\partial_{p_j}\chi$, $\partial_{p_i}\partial_{z_j}\chi$. Also note that if $\sigma_1=0$ then the only second derivatives terms that are involved are $\partial_{p_i}\partial_{p_j}\chi$ and $\partial_{p_i}\partial_{z_j}\chi$.

By assumption, these are all polynomially bounded  in $z$, uniformly on $[0,T]\times \mathbb{R}^{n+m}$, as is $\chi$.   Therefore, letting $\tilde C$ denote a constant that potentially varies line to line, there exists $r>0$ such that
\begin{align}
&E\left[\sup_{t\in[0,T]}\left|\int_0^t G(s,x_s^\epsilon,z_s^\epsilon)ds-\int_0^t \tilde G(s,x_s^\epsilon)ds\right|^p\right]\\
\leq &\tilde C \epsilon^{p/2}\left(E\left[\left(\int_0^T |V_1(s,x_s^\epsilon,z_s^\epsilon)| ds\right)^p\right]+E\left[\sup_{t\in[0,T]}\left|\int_0^t Q_{1j}(s,x_s^\epsilon,z_s^\epsilon) dW^j_s\right|^p\right]\right)\notag\\
&+\tilde C\epsilon^p\left(E\left[\left(\int_0^T |V_2(s,x_s^\epsilon,z_s^\epsilon)| ds\right)^p\right]+E\left[\sup_{t\in[0,T]}\left|\int_0^t Q_{2j}(s,x_s^\epsilon,z_s^\epsilon) dW^j_s\right|^p\right]\right.\notag\\
&\left.\hspace{15mm}+1+E\left[\sup_{t\in[0,T]}\|z_t^\epsilon\|^{rp}\right]\right).\notag
\end{align}
H\"older's inequality and polynomial boundedness yields
\begin{align}
E\left[\left(\int_0^T |V_i(s,x_s^\epsilon,z_s^\epsilon)| ds\right)^p\right]\leq& T^{p-1} E\left[ \int_0^T |V_i(s,x_s^\epsilon,z_s^\epsilon)|^p ds\right]\\
\leq & \tilde C T^{p}\left( 1+\sup_{t\in[0,T]}E\left[\|z_t^\epsilon\|^{rp} \right]\right).\notag
\end{align}
Applying the Burkholder-Davis-Gundy inequality to the terms involving $Q_{ij}$, (as found in, for example, Theorem 3.28 in \cite{karatzas2014brownian}), and then H\"older's inequality, we obtain
\begin{align}
&E\left[\sup_{t\in[0,T]}\left|\int_0^t Q_{ij}(s,x_s^\epsilon,z_s^\epsilon) dW^j_s\right|^p\right]\\
\leq &\tilde CE\left[\left( \int_0^T\| Q_i(s,x_s^\epsilon,z_s^\epsilon)\|^2ds\right)^{p/2}\right]\notag\\
\leq &\tilde C T^{p/2-1}E\left[ \int_0^T\| Q_i(s,x_s^\epsilon,z_s^\epsilon)\|^pds\right]\notag\\
\leq &\tilde C T^{p/2}\left( 1+\sup_{t\in[0,T]}E[\|z_t^\epsilon\|^{rp}]\right).\notag
\end{align}
Combining these bounds, and using  Assumption \ref{homog_assump1}, we find
\begin{align}
&E\left[\sup_{t\in[0,T]}\left|\int_0^t G(s,x_s^\epsilon,z_s^\epsilon)ds-\int_0^t \tilde G(s,x_s^\epsilon)ds\right|^p\right]\\
\leq &\tilde C \epsilon^{p/2}\left(1+\sup_{t\in[0,T]}E[\|z_t^\epsilon\|^{rp}]\right)\notag\\
&+\tilde C\epsilon^p\left(1+\sup_{t\in[0,T]}E[\|z_t^\epsilon\|^{rp}]+E\left[\sup_{t\in[0,T]}\|z_t^\epsilon\|^{rp}\right]\right)\notag\\
\leq &\tilde C \epsilon^{p/2}(1+O(1))+\tilde C\epsilon^p\left(1+O(1)+O(\epsilon^{-\delta})\right)\notag
\end{align}
for any $\delta>0$.  Letting $\delta=p/2$ we find
\begin{align}
E\left[\sup_{t\in[0,T]}\left|\int_0^t G(s,x_s^\epsilon,z_s^\epsilon)ds-\int_0^t \tilde G(s,x_s^\epsilon)ds\right|^p\right]=O(\epsilon^{p/2}).
\end{align}

Now use H\"older's inequality, the uniform Lipschitz property of $\tilde G$, and Assumption \ref{homog_assump1} again to compute
\begin{align}
&E\left[\sup_{t\in[0,T]}\left|\int_0^t G(s,x_s^\epsilon,z_s^\epsilon)ds-\int_0^t \tilde G(s,q_s^\epsilon,\psi(s,q_s^\epsilon))ds\right|^p\right]\\
\leq& O(\epsilon^{p/2})+\tilde C E\left[\sup_{t\in[0,T]}\left|\int_0^t \tilde G(s,x_s^\epsilon)-\tilde G(s,q_s^\epsilon,\psi(s,q_s^\epsilon))ds\right|^p\right]\notag\\
\leq& O(\epsilon^{p/2})+ \tilde CT^{p-1}E\left[\int_0^T |\tilde G(s,x_s^\epsilon)-\tilde G(s,q_s^\epsilon,\psi(s,q_s^\epsilon))|^p ds\right]\notag\\
\leq& O(\epsilon^{p/2})+\tilde C T^{p}\sup_{t\in[0,T]}E\left[ \|p_t^\epsilon-\psi(t,q_t^\epsilon)\|^p \right]\notag\\
=&O(\epsilon^{p/2}).\notag
\end{align}
This proves the claim for $p\geq 2$.  The result for arbitrary $p>0$ then follows from an application of H\"older's inequality.
\end{proof}

\subsubsection{Formal derivation of $\tilde G$}\label{sec:formal_G_tilde}
Formally applying the  Fredholm alternative to \req{chi_eq} motivates the form that $\tilde G$ must have in order for $\chi$ and its derivatives to possess the  growth bounds required by Theorem \ref{homog_thm}.  The formal calculation is simple enough that we repeat it here:\\
  Let $L^*$ be the formal adjoint to $L$ and suppose we have a  solution, $h(t,x,z)$, to
\begin{equation}\label{h_eq}
 L^*h=0, \hspace{2mm} \int h(t,x,z)dz=1.
\end{equation}
  If $\chi$ and its derivatives grow slowly enough and $h$ and its derivatives  decay quickly enough, then $\int hL\chi dz$ will exist, the boundary terms from integration by parts will vanish at infinity, and we find
\begin{align}
0=\int (L^*h)\chi dz=\int h L(\chi) dz=\int h (G-\tilde G)dz=\int h Gdz-\tilde G.
\end{align} 
Therefore we must have
\begin{align}
\tilde G(t,x)=\int h(t,x,z)G(t,x,z)dz.
\end{align}
In essence, the homogenized quantity is obtained by averaging over $h$, the instantaneous equilibrium distribution for the fast variables, $z$.   This corroborates the heuristic discussion in Section \ref{sec:perturb}.

\subsubsection{Limiting equation}\label{sec:limit_eq}
We now apply the above framework to prove existence of a limiting process $q_t^\epsilon\rightarrow q_s$ and deriving an SDE satisfied by $q_s$.  Specifically, we have:
\begin{theorem}\label{conv_thm}
Let  $T>0$, $p\geq 2$, $0<\beta\leq p/2$, $x_t^\epsilon=(q_t^\epsilon,p_t^\epsilon)$ satisfy the SDE \req{gen_SDE1}-\req{gen_SDE2}, suppose Assumptions \ref{homog_assump1}-\ref{homog_assump3}  hold, and that the SDE for $q_t^\epsilon$,  \req{gen_SDE1}, can be rewritten in the form
\begin{align}\label{con_thm_eq}
q_t^\epsilon=q_0^\epsilon+\int_0^t\tilde F(s,x_s^\epsilon)ds+\int_0^tG(s,x_s^\epsilon,z_s^\epsilon)ds+\int_0^t\tilde\sigma(s,x_s^\epsilon)dW_s+ R^\epsilon_t
\end{align}
where the components of $G$ have the properties described in Assumption \ref{homog_assump3}, $\tilde F(t,x):[0,\infty)\times\mathbb{R}^{n+m}\rightarrow\mathbb{R}^n$, $\tilde \sigma(t,x):[0,\infty)\times\mathbb{R}^{n+m}\rightarrow\mathbb{R}^{n\times k}$ are continuous, Lipschitz in $x$, uniformly in $t\in[0,T]$, and $R_t^\epsilon$ are continuous semimartingales that satisfy
\begin{align}
E\left[\sup_{t\in[0,T]}\| R_t^\epsilon\|^p\right]=O(\epsilon^\beta)\text{ as }\epsilon\rightarrow 0^+.
\end{align}
Suppose $\tilde G$ (from Assumption \ref{homog_assump3}) is Lipschitz in $x$,  uniformly in $t\in[0,T]$, and we have initial conditions  $E[\|q^\epsilon_0\|^p]<\infty$, $E[\|q_0\|^p]<\infty$, and\\ $E[\|q_0^\epsilon-q_0\|^p]=O(\epsilon^{p/2})$. Then
\begin{align}
E\left[\sup_{t\in[0,T]}\|q_t^\epsilon-q_t\|^p\right]=O(\epsilon^\beta)\text{ as }\epsilon\rightarrow 0^+
\end{align}
where $q_t$ satisfies the SDE
\begin{align}\label{gen_limit_eq}
q_t=q_0+&\int_0^t\tilde F(s,q_s,\psi(s,q_s))ds+\int_0^t\tilde G(s,q_s,\psi(s,q_s))ds\\
&+\int_0^t\tilde\sigma(s,q_s,\psi(s,q_s))dW_s.\notag
\end{align}
\end{theorem}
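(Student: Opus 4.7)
\medskip

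\textbf{Proof plan for Theorem \ref{conv_thm}.}

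The strategy is to reduce the problem to a standard stability estimate for the limiting SDE \req{gen_limit_eq} by first approximating each term in \req{con_thm_eq} by its counterpart evaluated on the slow manifold $p=\psi(t,q)$, and then closing the argument with Gronwall's inequality.

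First, I would invoke Theorem \ref{homog_thm} (second statement) on the $G$-term to obtain
\begin{equation*}
E\!\left[\sup_{t\in[0,T]}\left|\int_0^t G(s,x_s^\epsilon,z_s^\epsilon)\,ds-\int_0^t \tilde G(s,q_s^\epsilon,\psi(s,q_s^\epsilon))\,ds\right|^p\right]=O(\epsilon^{p/2}).
\end{equation*}
Next, using the uniform Lipschitz property of $\tilde F$ in $x$ together with Assumption \ref{homog_assump1}, which controls $\|p_s^\epsilon-\psi(s,q_s^\epsilon)\|$ in $L^p$ with rate $O(\epsilon^{p/2})$, and Hölder's inequality in $s$, one obtains an analogous $O(\epsilon^{p/2})$ bound for the difference of drifts $\tilde F(s,x_s^\epsilon)-\tilde F(s,q_s^\epsilon,\psi(s,q_s^\epsilon))$. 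For the diffusion, I would apply the Burkholder--Davis--Gundy inequality and then Hölder, again using the uniform Lipschitz property of $\tilde\sigma$ and Assumption \ref{homog_assump1}, to get
\begin{equation*}
E\!\left[\sup_{t\in[0,T]}\left\|\int_0^t\bigl[\tilde\sigma(s,x_s^\epsilon)-\tilde\sigma(s,q_s^\epsilon,\psi(s,q_s^\epsilon))\bigr]dW_s\right\|^p\right]=O(\epsilon^{p/2}).
\end{equation*}
Combining these three approximations with the assumed bound on $R_t^\epsilon$, and using $\beta\le p/2$, we obtain
\begin{equation*}
q_t^\epsilon=q_0^\epsilon+\int_0^t\tilde F(s,q_s^\epsilon,\psi(s,q_s^\epsilon))\,ds+\int_0^t\tilde G(s,q_s^\epsilon,\psi(s,q_s^\epsilon))\,ds+\int_0^t\tilde\sigma(s,q_s^\epsilon,\psi(s,q_s^\epsilon))\,dW_s+E_t^\epsilon,
\end{equation*}
where the remainder process $E_t^\epsilon$ satisfies $E[\sup_{t\in[0,T]}\|E_t^\epsilon\|^p]=O(\epsilon^\beta)$.

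Subtracting \req{gen_limit_eq} and writing $\Delta_t^\epsilon=q_t^\epsilon-q_t$, the differences in the $\tilde F$ and $\tilde G$ integrands are bounded in norm by $L\|\Delta_s^\epsilon\|$ using the uniform Lipschitz properties of $\tilde F$ and $\tilde G$ together with that of $\psi$ (which follows from Assumption \ref{homog_assump2}); the same applies to the $\tilde\sigma$ integrand. Taking the supremum over $t\in[0,u]$ and then the $p$-th moment, applying Hölder to the drift terms and BDG to the stochastic integral, I obtain
\begin{equation*}
E\!\left[\sup_{t\in[0,u]}\|\Delta_t^\epsilon\|^p\right]\le C\int_0^u E\!\left[\sup_{r\in[0,s]}\|\Delta_r^\epsilon\|^p\right]ds+C\,E[\|q_0^\epsilon-q_0\|^p]+C\,E\!\left[\sup_{t\in[0,T]}\|E_t^\epsilon\|^p\right],
\end{equation*}
for some constant $C$ depending on $T$, $p$, and the Lipschitz constants. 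Gronwall's lemma then yields $E[\sup_{t\in[0,T]}\|\Delta_t^\epsilon\|^p]=O(\epsilon^\beta)$, as desired.

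The routine part is the Gronwall closure; the main obstacle is the bookkeeping in the first stage, i.e.\ verifying that each step of the three-way approximation (homogenization of $G$, Lipschitz substitution for $\tilde F$, and BDG bound for the diffusion) produces error of order no worse than $\epsilon^\beta$ in the $p$-th moment. In particular, one must separately address the case $1\le p<2$ by Hölder's inequality (reducing to $p=2$), paralleling the end of the proof of Theorem \ref{homog_thm}; the Lipschitz assumption on $\tilde G$ in $x$ (rather than just $p$) is what makes the reduction from $\tilde G(s,x_s^\epsilon)$ to $\tilde G(s,q_s^\epsilon,\psi(s,q_s^\epsilon))$ available, since it allows the standard Lipschitz-plus-Assumption~\ref{homog_assump1} estimate to close.
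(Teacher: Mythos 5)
Your plan follows essentially the same route as the paper: homogenize the $G$-integral via Theorem \ref{homog_thm}, pass the remaining coefficients to the slow manifold $p=\psi(t,q)$ using their uniform Lipschitz property together with Assumption \ref{homog_assump1}, and close with a Gronwall estimate. The paper packages the final two steps into the modular Lemma \ref{conv_lemma}; it also applies the \emph{first} statement of Theorem \ref{homog_thm} (homogenizing $G$ to $\tilde G(s,x_s^\epsilon)$, not all the way to $\tilde G(s,q_s^\epsilon,\psi(s,q_s^\epsilon))$) and absorbs the result into the remainder $\tilde R^\epsilon_t$ before invoking the lemma, whereas you go directly to $\tilde G(s,q_s^\epsilon,\psi(s,q_s^\epsilon))$. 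Those are cosmetic differences.

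There is, however, a genuine gap. For your Gronwall closure to be legitimate you need the quantity $u\mapsto E\bigl[\sup_{t\in[0,u]}\|\Delta_t^\epsilon\|^p\bigr]$ to be a priori finite, which in turn requires $E\bigl[\sup_{t\in[0,T]}\|q_t^\epsilon\|^p\bigr]<\infty$ and $E\bigl[\sup_{t\in[0,T]}\|q_t\|^p\bigr]<\infty$. Neither is free: the first is obtained in the paper from the explicit $\epsilon$-dependent form of \req{gen_SDE1} using the polynomial bounds of Assumption \ref{homog_assump2}, BDG, H\"older, and Assumption \ref{homog_assump1}; the second requires a localization argument (stopping times $\tau_n=\inf\{t:\|q_t\|\geq n\}$, a Gronwall bound uniform in $n$, then monotone convergence). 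Without establishing these moment bounds the Gronwall step does not close, and this is precisely the nontrivial verification work that Lemma \ref{conv_lemma} demands. Your write-up does not mention it, so you should add it.

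A small point: your final remark about ``separately addressing the case $1\le p<2$'' is unnecessary here, since the hypothesis of this theorem is $p\geq 2$ (unlike Theorem \ref{homog_thm}, which allows all $p>0$).
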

\begin{proof}
 We will prove this theorem by verifying all the hypotheses of Lemma \ref{conv_lemma}. Define
\begin{align}
\tilde R_t^\epsilon=R_t^\epsilon+\int_0^tG(s,x_s^\epsilon,z_s^\epsilon)ds-\int_0^t\tilde G(s,x_s^\epsilon)ds.
\end{align}
Then
\begin{align}
q_t^\epsilon=q_0^\epsilon+\int_0^t\tilde F(s,x_s^\epsilon)ds+\int_0^t\tilde G(s,x_s^\epsilon)ds+\int_0^t\tilde\sigma(s,x_s^\epsilon)dW_s+ \tilde R^\epsilon_t
\end{align}
where $\tilde F+\tilde G$ and $\tilde \sigma$ are Lipschitz in $x$, uniformly for $t\in[0,T]$ and
\begin{align}
E\left[\sup_{t\in[0,T]}\|\tilde R_t^\epsilon\|^p\right]=O(\epsilon^\beta)
\end{align}
by Theorem \ref{homog_thm}. $E[\|q_0^\epsilon-q_0\|^p]=O(\epsilon^\beta)\text{ as }\epsilon\rightarrow 0^+$ by assumption and 
\begin{equation}
\sup_{t\in[0,T]}E[\|p_t^\epsilon-\psi(t,q_t^\epsilon)\|^p]=O(\epsilon^{p/2})\text{ as }\epsilon\rightarrow 0^+
\end{equation}
by Assumption \ref{homog_assump1}. Note that the assumptions also imply that a solution $q_t$ to \req{gen_limit_eq} exists for all $t\geq 0$  \cite{khasminskii2011stochastic}.

For any $\epsilon>0$, using the  Burkholder-Davis-Gundy inequality and H\"older's inequality we obtain the bound
\begin{align}
&E\left[\sup_{t\in[0,T]}\|q_t^\epsilon\|^p\right]\\
\leq&4^{p-1}\bigg(E\left[\|q_0^\epsilon\|^p\right]+\epsilon^{-p/2}E\left[\left(\int_0^T \|G_1(s,x_s^\epsilon,z_s^\epsilon)\|ds\right)^p\right]\notag\\
&+E\left[\left(\int_0^T \|F_1(s,x_s^\epsilon,z_s^\epsilon)\|ds\right)^p\right]+E\left[\sup_{t\in[0,T]}\left\|\int_0^t\sigma_1(s,x_s^\epsilon)dW_s\right\|^p\right]\bigg)\notag\\
\leq&4^{p-1}\bigg(E\left[\|q_0^\epsilon\|^p\right]+\epsilon^{-p/2}T^{p-1}\int_0^T E\left[\|G_1(s,x_s^\epsilon,z_s^\epsilon)\|^p\right]ds\notag\\
&+T^{p-1}\int_0^T E\left[\|F_1(s,x_s^\epsilon,z_s^\epsilon)\|^p\right]ds+\tilde C T^{p/2-1}\int_0^TE\left[ \|\sigma_1(s,x_s^\epsilon)\|^p_F\right]ds\bigg)\notag.
\end{align}
Polynomial boundedness (see Assumption \ref{homog_assump2}) gives
\begin{align}
&E\left[\sup_{t\in[0,T]}\|q_t^\epsilon\|^p\right]\leq4^{p-1}\bigg(E\left[\|q_0^\epsilon\|^p\right]+\tilde C \int_0^TE\left[ (1+\|z_s^\epsilon\|^q)^p\right]ds\bigg),
\end{align}
where we absorbed all factors of $T$ and $\epsilon$ into the constant $\tilde C$. Using Assumption \ref{homog_assump1} then gives
\begin{align}
&E\left[\sup_{t\in[0,T]}\|q_t^\epsilon\|^p\right]<\infty
\end{align}
for all $\epsilon$ sufficiently small.

Finally, for $n>0$ define the stopping time $\tau_n=\inf\{t\geq 0:\|q_t\|\geq n\}$.  Then for $0\leq t\leq T$ the Lipschitz properties together with the  Burkholder-Davis-Gundy  and H\"older's inequalities imply
\begin{align}
&E\left[\sup_{s\in[0,t]}\|q^{\tau_n}_s\|^p\right]\\
\leq&3^{p-1}\bigg(E[  \|q_0\|^p]+E\left[\left(\int_0^{t\wedge\tau_n} \|(\tilde F+\tilde G)(s,q^{\tau_n}_s,\psi(s,q^{\tau_n}_s))\|ds\right)^p\right]\notag\\
&\hspace{.75cm}+E\left[\sup_{t\in[0,t]}\left\|\int_0^{t\wedge\tau_n}\tilde\sigma(s,q_s^{\tau_n},\psi(s,q^{\tau_n}_s))dW_s\right\|^p\right]\bigg)\notag\\
\leq&3^{p-1}E[  \|q_0\|^p]+\tilde C\int_0^tE\left[\|q^{\tau_n}_s\|^p\right]ds\\
&+\tilde C\int_0^t\|(\tilde F+\tilde G)(s,0,\psi(s,0))\|^p+\|\tilde\sigma(s,0,\psi(s,0))\|^p_Fds\notag\\
\leq&\tilde C\left(1+ \int_0^tE\left[\sup_{r\in[0,s]}\|q^{\tau_n}_r\|^p\right]ds\right),
\end{align}
where $\tilde C$ changes line to line, and is independent of $t$.

The definition of $\tau_n$, together with $E[\|q_0\|^p]<\infty$, implies that  
\begin{equation}
\sup_{s\geq 0}E\left[\sup_{r\in[0,s]}\|q^{\tau_n}_r\|^p\right]<\infty.
\end{equation}
 Therefore we can apply Gronwall's inequality to get
\begin{align}
E\left[\sup_{t\in[0,T]}\|q^{\tau_n}_t\|^p\right]\leq \tilde Ce^{\tilde C T},
\end{align}
where the constant $\tilde C$ is independent of $n$.  Hence, the monotone convergence theorem yields
\begin{align}
E\left[\sup_{t\in[0,T]}\|q_t\|^p\right]\leq \tilde Ce^{\tilde C T}<\infty.
\end{align}

This completes the verification that the hypotheses of Lemma \ref{conv_lemma} hold, allowing us to conclude that
\begin{align}
E\left[\sup_{t\in[0,T]}\|q_t^\epsilon-q_t\|^p\right]=O(\epsilon^\beta)\text{ as }\epsilon\rightarrow 0^+.
\end{align}
\end{proof}

\section{Homogenization of Hamiltonian systems}

In this final section, we apply the above framework to our original Hamiltonian system, \req{Hamiltonian_SDE_q}-\req{Hamiltonian_SDE_p} (in particular, $m=n$ in this section) in order to prove the existence of a limiting process $q_t^\epsilon\to q_t$ and derive a homogenized SDE for $q_t$. Specifically, in Sections \ref{sec:h_explicit_sol} and \ref{sec:chi_explicit_sol} we will study a class of Hamiltonian systems  for which the PDEs  \req{h_eq} for $h$ and \req{chi_eq} for $\chi$ that are needed to derive the limiting equation are explicitly solvable and the required bounds can be verified by elementary means.

The SDE \req{Hamiltonian_SDE_q}-\req{Hamiltonian_SDE_p} can be rewritten in the general form \req{gen_SDE1}-\req{gen_SDE2}:
\begin{align}
dq^\epsilon_t=&\frac{1}{\sqrt{\epsilon}}\nabla_z K(t,q^\epsilon_t,z_t^\epsilon)dt,\label{q_Hamil_eq2}\\
d p^\epsilon_t=&\left(-\frac{1}{\sqrt{\epsilon}}\left(\gamma_l(t,x^\epsilon_t)- \nabla_q\psi_l(t,q_t^\epsilon)\right)\partial_{z_l}K(t,q_t^\epsilon,z_t^\epsilon)-\nabla_q K(t,q^\epsilon_t,z_t^\epsilon)\right.\label{p_Hamil_eq2}\\
&\hspace{5mm}-\nabla_q V(t,q^\epsilon_t)+F(t,x^\epsilon_t)\bigg)dt+\sigma(t,x^\epsilon_t) dW_t,\notag
\end{align}
where $\gamma_l$ denotes the vector obtained by taking the $l$th column of $\gamma$. Specifically, 
\begin{align}
&F_1=0, \hspace{2mm}\sigma_1=0, \hspace{2mm} \sigma_2=\sigma, \hspace{2mm} G_1(t,x,z)=\nabla_z K(t,q,z),\\
&F_2(t,x,z)=-\nabla_q K(t,q,z)-\nabla_q V(t,q)+F(t,x),\\
& G_2(t,x,z)=-\left(\gamma_l(t,x)- \nabla_q\psi_l(t,q)\right)\partial_{z_l}K(t,q,z).
\end{align}
In particular, $\sigma_1=0$, so below we use the definition of $C^{1,2}$ applicable to this case.

The operator $L$, \req{L_def}, and its formal adjoint  have the following form:
\begin{align}
(L\chi)(t,x,z)=&\frac{1}{2} \Sigma_{kl}(t,x)(\partial_{z_k}\partial_{z_l}\chi)(t,x,z)\label{Hamil_L}\\
&-\tilde\gamma_{kl}(t,x)\partial_{z_l}K(t,q,z)(\partial_{z_k}\chi)(t,x,z),\notag\\
(L^*h)(t,x,z)=&\partial_{z_k}\bigg(\frac{1}{2} \Sigma_{kl}(t,x)\partial_{z_l}h(t,x,z)\label{Hamil_L_star}\\
&+\tilde\gamma_{kl}(t,x)\partial_{z_l}K(t,q,z)h(t,x,z)\bigg),\notag
\end{align}
where 
\begin{equation}\label{hamil_Sigma_def}
\Sigma_{ij}=\sum_{\rho}\sigma_{i\rho}\sigma_{j\rho}
\end{equation} and $\tilde\gamma$ was defined in \req{tilde_gamma_def}.  Here $\sigma$ and $\Sigma$ denote what were $\sigma_2$ and $\Sigma_{22}$ respectively in \req{sigma_defs} and $\sigma_1=0$.  In particular, the indices on $\Sigma$ have the meaning $\Sigma_{ij}\equiv (\Sigma_{22})_{ij}$.

\subsection{Computing the noise induced drift}\label{sec:h_explicit_sol}
In general, an explicit solution to $L^*h=0$ is not available, and so the homogenized equation can only be defined  implicitly, as in Theorem \ref{conv_thm}.  However, there are certain classes of systems where we can explicitly derive the form of the additional vector field, $\tilde G$, appearing in the homogenized equation.  In \cite{BirrellHomogenization}, one such class was studied by a different method. Here, we explore the case where the noise and dissipation satisfy the fluctuation dissipation relation pointwise for a time and state dependent generalized temperature $T(t,q)$,
\begin{align}\label{fluc_dis}
\Sigma_{ij}(t,q)=2k_BT(t,q) \gamma_{ij}(t,q).
\end{align}
where $\Sigma$ was defined in \req{hamil_Sigma_def}. We will make Assumptions \ref{assump1}-\ref{assump4}, but make no further constraints on the form of the Hamiltonian here.

As can be verified by a direct calculation, under the assumption \req{fluc_dis}, the adjoint equation \req{Hamil_L_star} is solved by
\begin{align}\label{h_formula}
h(t,q,z)=\frac{1}{Z(t,q)} \exp[-\beta(t,q)K(t,q,z)],
\end{align}
where  we define $\beta(t,q)=1/(k_BT(t,q))$ and $Z$, the ``partition function",  is chosen so that $\int h dz=1$. Note that Assumption \ref{assump3} ensures  such a normalization exists. We also point out that in this case, the antisymmetric part of $\tilde \gamma$ does not contribute to the right hand side of \req{Hamil_L_star}.

An interesting point to note is that when the antisymmetric part of $\tilde\gamma$ vanishes (physically, for $K$ quadratic in $z$ this means a vanishing magnetic field), the vector field that we are taking the divergence of in \req{Hamil_L_star} vanishes identically.    When $\tilde\gamma$ has a non-vanishing antisymmetric part, only once we take the divergence does the expression in \req{Hamil_L_star} vanish.

From \req{q_eq}, we see that the terms that require homogenization are
\begin{align}\label{G_gen_def}
G(t,q_t^\epsilon,z_t^\epsilon)=&-(\tilde\gamma^{-1})^{ij}(t,q_t^\epsilon)\partial_{q^j}K(t,q_t^\epsilon,z_t^\epsilon)dt\\
&+(z_t^\epsilon)_j\partial_{q^l}(\tilde\gamma^{-1})^{ij}(t,q_t^\epsilon)\partial_{z_l}K(t,q_t^\epsilon,z_t^\epsilon)dt.\notag
\end{align} 
Using \req{h_formula}, the formal calculation of  Section \ref{sec:formal_G_tilde} gives
\begin{align}
\tilde G(t,q)= -(\tilde\gamma^{-1})^{ij}(t,q)\langle\partial_{q^j}K(t,q,z)\rangle+\frac{\partial_{q^l}(\tilde\gamma^{-1})^{il}(t,q)}{\beta(t,q)},
\end{align}
where we define
\begin{align}\label{h_avg_def}
\langle\partial_{q^j}K(t,q,z)\rangle=\frac{1}{Z(t,q)} \int \partial_{q^j}K(t,q,z)\exp[-\beta(t,q)K(t,q,z)]dz.
\end{align}
Of course, this calculation is only formal.  In the next section, we study a particular case where everything can be made rigorous.

\subsection{Rigorous Homogenization of a  class of Hamiltonian systems }\label{sec:chi_explicit_sol}
In this section we explore a class of Hamiltonian systems for which Assumption \ref{homog_assump3} can be rigorously verified via an explicit solution to the PDE for $\chi$. We will work with Hamiltonian systems that satisfy Assumptions \ref{assump1}-\ref{assump5}, \ref{assump7}.  In particular, we are restricting to the class of Hamiltonians with
\begin{align} \label{K_tilde_def}
K(t,q,z)=\tilde K(t,q,A^{ij}(t,q)z_iz_j),
\end{align}
where $A(t,q)$ is valued in the space of positive definite $n \times n$-matrices.  We will write $\tilde K\equiv \tilde K(t,q,\zeta)$ and $\tilde K^\prime\equiv \partial_{\zeta}\tilde K$.

 We will also need the following relations between $\Sigma$, $\gamma$, and $A$  to hold:
\begin{assumption}\label{proportionality_assump}
$\sigma$ is independent of $p$ and
\begin{align}
 \Sigma(t,q)=b_1(t,q)A^{-1}(t,q) , \hspace{2mm} \gamma(t,q)=b_2(t,q)A^{-1}(t,q)
\end{align}
where, for every $T>0$, the $b_i$ are bounded, $C^2$ functions that have positive lower bounds and bounded first derivatives, both on $[0,T]\times\mathbb{R}^n$.
\end{assumption}
Note that these relations imply a fluctuation-dissipation relation with a time and state dependent generalized temperature $T=\frac{b_1}{2k_Bb_2}$.

In  \cite{BirrellHomogenization}, we showed that   Assumptions \ref{assump1}-\ref{assump5} imply:
\begin{align}\label{q_eq2}
d(q_t^\epsilon)^i=&\tilde F^i(t,x)dt+\tilde\sigma^i_{\rho}(t,x_t^\epsilon)dW^\rho_t+G^i(t,x_t^\epsilon,z_t^\epsilon)dt+d(R^\epsilon_t)^i,\notag
\end{align}
where
\begin{align}
\tilde F^i(t,x)=&(\tilde\gamma^{-1})^{ij}(t,q)(-\partial_t\psi_j(t,q)-\partial_{q^j}V(t,q)+F_j(t,x))+S^i(t,q),\\
G^i(t,q,z)=&-(\tilde\gamma^{-1})^{ij}(t,q)(\partial_{q^j}\tilde K)(t,q,A^{ij}(t,q)z_iz_j),\\
 \tilde \sigma^i_\rho(t,x)=& (\tilde\gamma^{-1})^{ij}(t,q)\sigma_{j\rho}(t,x),\\
S^i(t,q)=& k_BT(t,q) \left(\partial_{q^j}(\tilde\gamma^{-1})^{ij}(t,q)-\frac{1}{2}(\tilde\gamma^{-1})^{ik}(t,q)A^{-1}_{jl}(t,q)\partial_{q^k} A^{jl}(t,q)\right),\label{S_def}
\end{align}
and $R_t^\epsilon$ is a family of continuous semimartingales. $S(t,q)$ is called the {\em noise induced drift} (see Eq. 3.26 in \cite{BirrellHomogenization}).  

Note, that with $K(t,q,z)$ defined by \req{K_tilde_def}, the first term in  \req{G_gen_def} consists of two contributions---one coming from the $q$-dependence of $\tilde{K}$ and one coming from the $q$-dependence of $A$.  The $G$ defined here comprises only the first contribution.  The method of \cite{BirrellHomogenization} is able to homogenize the second term in \req{G_gen_def}, as well the first contribution of the first term, leading to the noise induced drift, S, but fails when $\tilde K$ depends explicitly on $q$.  However, under certain circumstances, the method developed in Section \ref{sec:gen_homog} is succeeds in homogenizing the system when $\tilde K$ has $q$ dependence, as we now show.

 We will   need one final assumption:
\begin{assumption}\label{K_poly_bound_assump}
For every $T>0$:
\begin{enumerate}
\item There exists $\zeta_0>0$ and $C>0$ such that $\tilde K^\prime(t,q,\zeta)\geq C$ for all $(t,q,\zeta)\in[0,T]\times\mathbb{R}^n\times[\zeta_0,\infty)$.  
\item $\tilde K(t,q,\zeta)$, $\partial_t\partial_{q^i}\tilde K(t,q,\zeta)$, $\partial_{q^i}\partial_{\zeta}\tilde K(t,q,\zeta)$, and $\partial_{q^i}\partial_{q^j}\tilde K(t,q,\zeta)$ are polynomially bounded in $\zeta$, uniformly in $(t,q)\in[0,T]\times\mathbb{R}^n$.
\end{enumerate}
\end{assumption}

We are now prepared to prove the following homogenization result:
\begin{theorem}\label{hamil_conv_thm}
Let $x_t^\epsilon=(q_t^\epsilon,p_t^\epsilon)$ satisfy the Hamiltonian SDE \req{Hamiltonian_SDE_q}-\req{Hamiltonian_SDE_p} and suppose Assumptions \ref{assump1}-\ref{assump5}, \ref{assump7}, \ref{proportionality_assump}, and \ref{K_poly_bound_assump}  hold. Let $p\geq 2$ and suppose we have initial conditions that satisfy $E[\|q^\epsilon_0\|^p]<\infty$, $E[\|q_0\|^p]<\infty$, and $E[\|q_0^\epsilon-q_0\|^p]=O(\epsilon^{p/2})$. Then for any $T>0$, $0<\beta<p/2$ we have
\begin{align}
E\left[\sup_{t\in[0,T]}\|q_t^\epsilon-q_t\|^p\right]=O(\epsilon^\beta)\text{ as }\epsilon\rightarrow 0^+
\end{align}
where $q_t$ is the solution to the SDE
\begin{align}\label{limit_eq}
dq_t^i=&(\tilde\gamma^{-1})^{ij}(t,q_t)(-\partial_t\psi_j(t,q_t)-\partial_{q^j}V(t,q_t)+F_j(t,q_t,\psi(t,q_t)))dt\\
&+S^i(t,q_t)dt+\tilde G^i(t,q_t)dt+(\tilde\gamma^{-1})^{ij}(t,q_t)\sigma_{j\rho}(t,q_t)dW_t^\rho\notag
\end{align}
with initial condition $q_0$. See \req{tilde_gamma_def}, \req{S_def}, and \req{hamil_tilde_G_def} for the definitions of $\tilde\gamma$, $S$, and $\tilde G$, respectively.
\end{theorem}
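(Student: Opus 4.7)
The plan is to apply Theorem \ref{conv_thm} to the decomposition \req{q_eq2} that was already derived in \cite{BirrellHomogenization} under Assumptions \ref{assump1}--\ref{assump5}, \ref{assump7}. That decomposition writes $q^\epsilon_t$ as an $\tilde F + \tilde\sigma\, dW$ part (already containing the noise-induced drift $S$), plus a remainder $R^\epsilon_t$ satisfying the required $O(\epsilon^\beta)$ bound, plus an integrand $G^i(t,q^\epsilon_t,z^\epsilon_t) = -(\tilde\gamma^{-1})^{ij}(t,q)(\partial_{q^j}\tilde K)(t,q,A^{kl}(t,q)z_kz_l)$ that still needs to be homogenized. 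Thus the task reduces to verifying Assumption \ref{homog_assump3} for this particular $G$: produce $\tilde G$ and a $C^{1,2}$ cell-problem solution $\chi$ with $L\chi = G - \tilde G$ and the requisite polynomial bounds.

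First I would identify $\tilde G$. Under Assumption \ref{proportionality_assump} one has $\Sigma = b_1 A^{-1}$, $\gamma = b_2 A^{-1}$, so the fluctuation--dissipation relation \req{fluc_dis} holds with $k_B T = b_1/(2b_2)$, and the Gibbs density $h(t,q,z) = Z^{-1}\exp(-\beta K)$ from \req{h_formula} solves $L^*h=0$; this is well-defined because the assumption $\tilde K'(t,q,\zeta)\geq C > 0$ for $\zeta\geq \zeta_0$ forces $\tilde K$ to grow at least linearly in $\zeta$, hence $h$ is integrable and has Gaussian-like tails. The formal calculation of Section \ref{sec:formal_G_tilde} then gives $\tilde G^i(t,q) = -(\tilde\gamma^{-1})^{ij}(t,q)\langle \partial_{q^j}\tilde K\rangle(t,q)$, where the average is over the measure $h\,dz$.

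Next I would construct $\chi$ by exploiting the fact that $G^i$, like $K$, depends on $z$ only through the rotational invariant $\zeta = A^{kl}(t,q)z_kz_l$. Making the ansatz $\chi^i(t,q,p,z) = -(\tilde\gamma^{-1})^{ij}(t,q)\,\eta_j(t,q,\zeta)$, a direct calculation using $\Sigma = b_1 A^{-1}$ and $\gamma = b_2 A^{-1}$ reduces $L\chi^i = G^i - \tilde G^i$ to the scalar ODE
\begin{equation}
2b_1\zeta\, \eta_j'' + (b_1 n - 4 b_2 \tilde K'\zeta)\,\eta_j' = \partial_{q^j}\tilde K(t,q,\zeta) - \langle\partial_{q^j}\tilde K\rangle(t,q)
\end{equation}
for $\eta_j'$, where the antisymmetric part of $\tilde\gamma$ drops out since $\tilde\gamma^a_{kl}A^{lj}z_jA^{km}z_m = 0$. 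The integrating factor is $\mu(\zeta) = \zeta^{n/2}e^{-\beta \tilde K}$, and the solvability condition $\int_0^\infty \mu\,(\partial_{q^j}\tilde K - \langle\partial_{q^j}\tilde K\rangle)\,d\zeta = 0$ (which follows from the definition of $\langle\cdot\rangle$ after passing to spherical coordinates in $z$) lets us choose the constant of integration so that $\eta_j'(t,q,\zeta) = \mu(\zeta)^{-1}\int_0^\zeta \mu(\zeta')(\partial_{q^j}\tilde K - \langle\partial_{q^j}\tilde K\rangle)(2b_1\zeta')^{-1}d\zeta'$ is regular at $\zeta=0$ and decays rapidly at $\infty$; then $\eta_j$ is obtained by a further integration in $\zeta$.

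The main obstacle will be verifying the polynomial-in-$z$ bounds required by Assumption \ref{homog_assump3} for $\chi$ and all the required derivatives (in this setting, since $\sigma_1=0$, only $\partial_{p_i}\partial_{p_j}\chi$ and $\partial_{p_i}\partial_{z_j}\chi$ need the second-derivative control, but the first-derivative bound must also be established in $t,q,p,z$). The key quantitative input is that the lower bound $\tilde K'(t,q,\zeta)\geq C$ for large $\zeta$ makes $\mu(\zeta)$ decay like $\zeta^{n/2}e^{-C\beta\zeta}$ uniformly in $(t,q)\in[0,T]\times\mathbb{R}^n$, so the integral defining $\eta_j'$ is bounded by a polynomial in $\zeta$ after dividing by $\mu$ on intervals where the numerator dominates. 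The polynomial bounds on $\partial_{q^i}\tilde K$, $\partial_{q^i}\partial_{q^j}\tilde K$, $\partial_{q^i}\partial_\zeta \tilde K$ and $\partial_t\partial_{q^i}\tilde K$ from Assumption \ref{K_poly_bound_assump}, together with the existing control of $\tilde\gamma^{-1}$ and $A$ from Assumption \ref{proportionality_assump}, then propagate to all required derivatives of $\chi$ by explicit differentiation under the integral, noting that $\partial_t$ and $\partial_q$ of $\beta$, $\mu$ and $A$ contribute only polynomial factors because of the uniform lower bound on $b_2/b_1$. Once these bounds are in hand, Assumption \ref{homog_assump3} holds for $G$, the Lipschitz-in-$x$ hypothesis on $\tilde F + \tilde G$ and $\tilde\sigma$ follows from smoothness of $\tilde\gamma^{-1}$, $\psi$, $V$, $F$, $\sigma$ (via Assumptions \ref{assump1}--\ref{assump5}) combined with boundedness of the $\zeta$-integrals defining $\tilde G$, and a direct invocation of Theorem \ref{conv_thm} yields \req{limit_eq} with the stated error $O(\epsilon^\beta)$.
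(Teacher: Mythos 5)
Your proposal follows essentially the same route as the paper: apply Theorem \ref{conv_thm} to the decomposition \req{q_eq2} from \cite{BirrellHomogenization}, identify $\tilde G$ via the Gibbs density $h=Z^{-1}e^{-\beta K}$, reduce the cell problem $L\chi = G-\tilde G$ to a scalar ODE in $\zeta=\|z\|_A^2$ (the antisymmetric part of $\tilde\gamma$ dropping out), solve with the integrating factor $\zeta^{n/2}e^{-\beta\tilde K}$, and use the $\tilde K'\geq C$ lower bound together with the polynomial bounds from Assumption \ref{K_poly_bound_assump} to verify the growth control in Assumption \ref{homog_assump3}. The only cosmetic difference is that you factor $-(\tilde\gamma^{-1})^{ij}$ out of the ansatz for $\chi^i$ (legitimate since $L$ acts only in $z$), whereas the paper keeps it inside $G^i$; the resulting ODE and solution \req{tilde_chi_def} are equivalent.
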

\begin{proof}
 From \cite{BirrellHomogenization}, Assumptions \ref{assump1}-\ref{assump5}, \ref{assump7}  imply:
\begin{enumerate}
\item  $q_t^\epsilon$ satisifies an equation of the form \req{con_thm_eq}, where, for every $T>0$, $\tilde F$, $\tilde\sigma$ are bounded, continuous, and Lipschitz in $x$, on $[0,T]\times\mathbb{R}^{2n}$, with Lipschitz constant uniform in $t$. 
\item  Assumptions  \ref{homog_assump1} holds. 
\item For any $p>0$, $T>0$, $0<\beta<p/2$ we have
\begin{align}
E\left[\sup_{t\in[0,T]}\|R_t^\epsilon\|^p\right]=O(\epsilon^\beta) \text{ as } \epsilon \rightarrow 0^+.
\end{align}
\end{enumerate}
Combined with polynomial boundedness of $\tilde K$ (Assumption \ref{K_poly_bound_assump}) we see that Assumption \ref{homog_assump2} also holds. Therefore, to apply  Theorem \ref{conv_thm}, we have to verify Assumption \ref{homog_assump3} and that the $\tilde G$ referenced therein is Lipschitz in $x$, uniformly in $t\in [0,T]$.

From Section \ref{sec:formal_G_tilde}, we expect that
\begin{align}\label{hamil_tilde_G_def}
\tilde G^i(t,q)=-(\tilde\gamma^{-1})^{ij}(t,q)\langle\partial_{q^j}\tilde K(t,q,A^{ij}(t,q)z_iz_j)\rangle
\end{align}
where, similarly to \req{h_avg_def},
\begin{align}
\langle\partial_{q^j}\tilde K(t,q,\|z\|_A^2)\rangle=\frac{1}{Z(t,q)} \int\partial_{q^j}\tilde K(t,q,\|z\|^2_A)\exp[-\beta(t,q)\tilde K(t,q,\|z\|^2_A)]dz.
\end{align}
Here we use the shorthand $\|z\|^2_A\equiv A^{ij}(t,q)z_iz_j$ when the implied values of $t,q$ are apparent from the context.

Using our assumptions, along with several applications of the DCT, one can see that $\tilde G$ is $C^{1}$ and, for every $T>0$, is bounded with bounded first derivatives on $[0,T]\times\mathbb{R}^n$.  In particular, it is Lipschitz in $q$, uniformly in $t\in[0,T]$.

We now turn to solving the equation 
\begin{align}\label{L_pde}
L\chi=G-\tilde G.
\end{align}
  Since $G-\tilde G$ is independent of $p$ and depends on $z$ only through $\|z\|^2_A$, we look for $\chi$ with the same behavior.  Using the ansatz $\chi(t,q,z)=\tilde\chi(t,q,\|z\|^2_A)$, and defining $G^i(t,q,\zeta)=-(\tilde\gamma^{-1})^{ij}(t,q)(\partial_{q^j}\tilde K)(t,q,\zeta)$, leads (on account of the antisymmetry of the matrix $\tilde{\gamma} - \gamma$) to the ODE in the variable $\zeta$:
\begin{align}
&\zeta\tilde\chi^{\prime\prime}(t,q,\zeta) +\left(\frac{n}{2}- \beta(t,q) \zeta  \tilde K^\prime(t,q,\zeta)\right) \tilde\chi^\prime(t,q,\zeta)\\
=&\frac{1}{2 b_1(t,q)}(G(t,q,\zeta)-\tilde G(t,q)).\notag
\end{align}

This has the solution
\begin{align}\label{tilde_chi_def}
&\tilde\chi(t,q,\zeta)=\frac{1}{2b_1(t,q)}\int_0^\zeta \zeta_1^{-n/2}\exp[\beta(t,q) \tilde K(t,q,\zeta_1)]\\
&\times \int_0^{\zeta_1} \zeta_2^{(n-2)/2}\exp[-\beta(t,q) \tilde K(t,q,\zeta_2)]\left(G(t,q,\zeta_2)-\tilde G(t,q)\right)d\zeta_2d\zeta_1\notag
\end{align}

Therefore $\chi(t,q,z)\equiv\tilde\chi(t,q,\|z\|_A^2)$  solves the PDE \req{L_pde}.  One can show that it is is $C^{1,2}$ and that   $\chi$ and its first derivatives are polynomially bounded in $z$, uniformly for $(t,q)\in[0,T]\times\mathbb{R}^{n}$.   As a representative example, in  \ref{app:poly_bound} we outline the proof that $\tilde\chi(t,q,\zeta)$ is polynomially bounded in $\zeta$, uniformly in $(t,q)\in[0,T]\times\mathbb{R}^n$.  The remainder of the computations are similar and we leave them to the reader.

$\chi$ is independent of $p$, so  $\partial_{p_i}\partial_{p_j}\chi=0$ and $\partial_{p_i}\partial_{z_j}\chi=0$.  Therefore, this completes the verification of Assumption \ref{homog_assump3} and we are justified in using  Theorem \ref{conv_thm} to conclude
\begin{align}
E\left[\sup_{t\in[0,T]}\|q_t^\epsilon-q_t\|^p\right]=O(\epsilon^\beta)\text{ as }\epsilon\rightarrow 0^+,
\end{align}
where $q_t$ satisfies the SDE
\begin{align}
q_t=q_0+&\int_0^t\tilde F(s,q_s,\psi(s,q_s))ds+\int_0^t\tilde G(s,q_s)ds+\int_0^t\tilde\sigma(s,q_s)dW_s\notag 
\end{align}
as claimed.

\end{proof}

Lastly, we give an example of a general class of Hamiltonians that satisfy the hypotheses of Theorem \ref{hamil_conv_thm}.  The proof of this corollary is straighforward, so we leave it to the reader.
\begin{corollary}
Consider the class of Hamiltonians of the form
\begin{align}
H(t,q,p)=\sum_{l=k_1}^{k_2} d_l(t,q) \left[A^{ij}(t,q) (p-\psi(t,q))_i(p-\psi(t,q))_j\right]^l+V(t,q)
\end{align}
where $1\leq k_1\leq k_2$ are integers  and the following properties hold on $[0,T]\times\mathbb{R}^{n}$ for every $T>0$:
\begin{enumerate}
\item $V$ is $C^2$ and $\nabla_q V$ is bounded and Lipschitz in $q$, uniformly in $t\in[0,T]$.
\item $\psi$ is $C^3$ and $\partial_t\psi$, $\partial_{q^i}\psi$, $\partial_t\partial_{q^i}\psi$,  $\partial_{q^i}\partial_{q^j}\psi$, $\partial_t\partial_{q^j}\partial_{q^i}\psi$, and $\partial_{q^l}\partial_{q^j}\partial_{q^i}\psi$ are bounded.
\item $d_l$ are $C^2$, non-negative, bounded, and have bounded first  and second derivatives.
\item $d_{k_1}$ and $d_{k_2}$ are uniformly bounded below by a positive constant.
\item $A$ is $C^2$, positive-definite, and $A$, $\partial_t A$, $\partial_{q_i} A$,  $\partial_t \partial_{q^i}A$,  and $\partial_{q^i}\partial_{q^j} A$  are bounded.
\item The eigenvalues of $A$ are uniformly bounded below by a positive constant.
\end{enumerate}
Also suppose that
\begin{enumerate}
\item $\sigma$ is independent of $p$ and
\begin{align}
 \Sigma(t,q)=b_1(t,q)A^{-1}(t,q) , \hspace{2mm} \gamma(t,q)=b_2(t,q)A^{-1}(t,q)
\end{align}
where, for every $T>0$, the $b_i$ are bounded, $C^2$ functions with positive lower bounds and bounded first derivatives.
\item $\gamma$ is $C^2$, is independent of $p$, and  $\partial_t\gamma$, $\partial_{q^i} \gamma$, $\partial_t\partial_{q^j}\gamma$,  $\partial_{q^i}\partial_{q^j}\gamma$ are bounded on $[0,T]\times\mathbb{R}^{n}$.
 \item The eigenvalues of $\gamma$ are bounded below by some $\lambda>0$.
\item  $\gamma$, $F$, and $\sigma$ are bounded.
\item  $F$ and $\sigma$ are Lipschitz in $x$ uniformly in $t\in[0,T]$.
\item There exists $C>0$ such that the (random) initial conditions satisfy $K^\epsilon(0,x^\epsilon_0)\leq C$ for all $\epsilon>0$ and all $\omega\in\Omega$.
\item  There is a $p\geq 2$ such that 
\begin{align}
E[\|q^\epsilon_0\|^p]<\infty, \hspace{2mm} E[\|q_0\|^p]<\infty,\text{ and } E[\|q_0^\epsilon-q_0\|^p]=O(\epsilon^{p/2}).
\end{align}
\end{enumerate}
 Then  all the hypotheses of Theorem \ref{hamil_conv_thm} hold, in particular Assumptions \ref{assump1}-\ref{assump5}, \ref{assump7}, \ref{proportionality_assump}, and \ref{K_poly_bound_assump}  hold, and hence, for any $\beta \in \left(0, {p \over 2}\right)$, 
\begin{align}
E\left[\sup_{t\in[0,T]}\|q_t^\epsilon-q_t\|^p\right]=O(\epsilon^\beta)\text{ as }\epsilon\rightarrow 0^+,
\end{align}
where $x_t^\epsilon=(q_t^\epsilon,p_t^\epsilon)$ satisfy the Hamiltonian SDE \req{Hamiltonian_SDE_q}-\req{Hamiltonian_SDE_p} and $q_t$ satisfies  the homogenized SDE, \req{limit_eq}.
\end{corollary}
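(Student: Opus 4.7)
The plan is a straight verification: check that each hypothesis of Theorem~\ref{hamil_conv_thm}---namely Assumptions~\ref{assump1}--\ref{assump5}, \ref{assump7}, \ref{proportionality_assump}, and \ref{K_poly_bound_assump}---holds for the given class of Hamiltonians, then invoke the theorem. First I would set $\tilde K(t,q,\zeta):=\sum_{l=k_1}^{k_2}d_l(t,q)\zeta^l$ and observe that, with $\|z\|_A^2:=A^{ij}(t,q)z_iz_j$, the kinetic term of $H$ matches the form $K(t,q,z)=\tilde K(t,q,\|z\|_A^2)$ required by \req{K_tilde_def}. The proportionality relations for $\Sigma$ and $\gamma$ and the boundedness of the $b_i$ that comprise Assumption~\ref{proportionality_assump} are hypothesized directly in item~(1) of the ``Also suppose that'' list, so there is nothing to show there.

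Assumption~\ref{K_poly_bound_assump} is next. Differentiating gives $\tilde K^\prime(t,q,\zeta)=\sum_{l=k_1}^{k_2}l\,d_l(t,q)\zeta^{l-1}$. Since each $d_l\geq 0$ and $d_{k_2}$ has a uniform positive lower bound $c>0$ on $[0,T]\times\mathbb{R}^n$, we get $\tilde K^\prime(t,q,\zeta)\geq k_2 c\,\zeta^{k_2-1}\geq k_2 c$ for all $\zeta\geq\zeta_0:=1$, establishing part~(1). Part~(2) is immediate: $\tilde K$, $\partial_t\partial_{q^i}\tilde K$, $\partial_{q^i}\partial_\zeta\tilde K$, and $\partial_{q^i}\partial_{q^j}\tilde K$ are polynomials in $\zeta$ of degree $\leq k_2$ whose coefficients are bounded linear combinations of the $d_l$ and their first and second derivatives, all bounded by assumption.

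The bulk of the work is verifying Assumptions~\ref{assump1}--\ref{assump5}, \ref{assump7} from \cite{BirrellHomogenization}. Each is a regularity/coercivity/boundedness statement about the coefficients of the SDE \req{Hamiltonian_SDE_q}--\req{Hamiltonian_SDE_p}, and each follows mechanically from the listed hypotheses on $V$, $\psi$, the $d_l$, $A$, $\gamma$, $\sigma$, $F$, and the initial data. The positivity of $d_{k_1}$, $d_{k_2}$, together with the uniform eigenvalue lower bound on $A$, supplies coercivity of $K$ in $z$ (and hence non-explosion of the SDE); the prescribed $C^2$ regularity with bounded derivatives ($C^3$ in the case of $\psi$) yields the smoothness and linear-growth bounds on $\nabla_q H$ and $\nabla_p H$; ellipticity of $\gamma=b_2 A^{-1}$, with $b_2$ bounded below and the eigenvalue bound on $A$, gives the damping control $\gamma\geq\lambda I$; Lipschitz and boundedness of $F$ and $\sigma$ provide the stochastic-integral estimates; and the bounded initial kinetic energy $K^\epsilon(0,x_0^\epsilon)\leq C$ together with the $L^p$ bounds on $q_0^\epsilon$ give the $\epsilon$-uniform control at $t=0$.

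The hard part will be the bookkeeping: the assumptions in \cite{BirrellHomogenization} are often phrased in terms of $\tilde\gamma=\gamma+\partial_q\psi-(\partial_q\psi)^T$, its inverse, and the derivatives of the coefficients appearing in the rewritten equation \req{q_eq2}, so one has to track how the regularity hypotheses on $\psi$ (up to three derivatives) and on $A$, $\gamma$ (up to two derivatives) propagate through $\tilde\gamma^{-1}$ and through each drift term. Once all assumptions are in place, Theorem~\ref{hamil_conv_thm} applies verbatim and delivers the claimed $O(\epsilon^\beta)$ convergence to the solution of the homogenized SDE \req{limit_eq}.
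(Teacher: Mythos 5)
Your plan is correct and matches the approach the paper itself intends (the paper explicitly defers this verification to the reader, and the commented-out sketch in the source does exactly the term-by-term check you outline). The one genuinely non-trivial step—Assumption~\ref{K_poly_bound_assump}(1)—you handle correctly, though you bound $\tilde K'$ from below using $d_{k_2}$ (getting $\tilde K'\geq k_2 d_{k_2}\zeta^{k_2-1}\geq k_2 c$ for $\zeta\geq 1$), whereas the paper's commented sketch uses $d_{k_1}$ for the same purpose; both work because every term in the sum is non-negative and $\zeta^{l-1}\geq 1$ once $\zeta\geq 1$ for any $l\geq 1$.
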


\appendix

\numberwithin{assumption}{section}
\section{Material from \cite{BirrellHomogenization}}\label{app:assump}
\setcounter{assumption}{0}
    \renewcommand{\theassumption}{\Alph{section}\arabic{assumption}}

    \renewcommand{\thelemma}{\Alph{section}\arabic{lemma}}

In this appendix, we collect several useful assumptions and results from \cite{BirrellHomogenization}. The assumptions listed here are {\em not} used in the entirety of  this current work. When they are needed for a particular result we explicitly references them.

\begin{assumption}\label{assump1}
We assume that the Hamiltonian has the form given in \req{H_form} where $K$ and $\psi$ are $C^2$ and $K$ is non-negative.  For every $T>0$, we assume the following bounds hold on $[0,T]\times\mathbb{R}^{2n}$:
\begin{enumerate}
\item There exist $C>0$ and $M>0$ such that
\begin{align}\label{K_assump1}
\max\{|\partial_t K(t,q,z)|,\|\nabla_qK(t,q,z)\|\}\leq M+CK(t,q,z).
\end{align}
\item There exist  $c>0$ and $M\geq 0$ such that
\begin{align}\label{K_assump2}
\|\nabla_z K(t,q,z)\|^2+M\geq c K(t,q,z).
\end{align}
\item
For every $\delta>0$ there exists an $M>0$ such that
\begin{align}\label{K_assump3}
\max\left\{\|\nabla_z K(t,q,z)\|,\left(\sum_{ij}|\partial_{z_i}\partial_{z_j}K(t,q,z)|^2\right)^{1/2}\right\}\leq M+\delta K(t,q,z).
\end{align}
\end{enumerate}

\end{assumption}

\begin{assumption}\label{assump2}
For every $T>0$, we assume that the following hold uniformly on $[0,T]\times\mathbb{R}^n$:
\begin{enumerate}
\item $V$ is $C^2$ and $\nabla_q V$ is bounded
\item $\gamma$ is symmetric with eigenvalues bounded below by some $\lambda>0$.
\item  $\gamma$, $F$, $\partial_t\psi$, and $\sigma$ are bounded.
\item There exists $C>0$ such that the (random) initial conditions satisfy $K^\epsilon(0,x^\epsilon_0)\leq C$ for all $\epsilon>0$ and all $\omega\in\Omega$.
\end{enumerate}
\end{assumption}

\begin{assumption}\label{assump3}
We assume that for every $T>0$ there exists $c>0$, $\eta>0$ such that
\begin{align}
K(t,q,z)\geq c\|z\|^{2\eta}
\end{align}
on $[0,T]\times\mathbb{R}^{2n}$.
\end{assumption}

\begin{assumption}\label{assump4}
We assume that $\gamma$ is $C^1$ and is independent of $p$.
\end{assumption}

\begin{assumption}\label{assump5}
We assume that $K$ has the form
\begin{align}
K(t,q,z)=\tilde K(t,q,A^{ij}(t,q)z_iz_j)
\end{align}
where $\tilde K(t,q,\zeta)$ is $C^2$ and non-negative on $[0,\infty)\times\mathbb{R}^n\times[0,\infty)$ and $A(t,q)$ is a $C^2$ function whose values are symmetric $n \times n$-matrices.   We also assume that for every $T>0$, the eigenvalues of $A$ are bounded above and below by some constants $C>0$ and $c>0$ respectively, uniformly on $[0,T]\times\mathbb{R}^n$.  

We will write $\tilde K^\prime$ for $\partial_\zeta \tilde K$ and will use the abbreviation $\|z\|_A^2$ for $A^{ij}(t,q)z_iz_j$ when the implied values of $t$ and $q$ are apparent from the context.
\end{assumption}
\setcounter{assumption}{6}
\begin{assumption}\label{assump7}
We assume that, for every $T>0$, $\nabla_q V$, $F$, and $\sigma$ are Lipschitz in $x$ uniformly in $t\in[0,T]$.  We also assume that $A$ and $\gamma$ are $C^2$, $\psi$ is $C^3$, and $\partial_t\psi$, $\partial_{q^i}\psi$, $\partial_{q^i}\partial_{q_j}\psi$, $\partial_t\partial_{q^i}\psi$, $\partial_t\partial_{q^j}\partial_{q^i}\psi$, $\partial_{q^l}\partial_{q^j}\partial_{q^i}\psi$, $\partial_t\gamma$, $\partial_{q^i} \gamma$, $\partial_t\partial_{q^j}\gamma$,  $\partial_{q^i}\partial_{q^j}\gamma$, $\partial_t A$, $\partial_{q^i} A$, $\partial_t \partial_{q^i}A$,  and $\partial_{q^i}\partial_{q^j} A$ are bounded on $[0,T]\times\mathbb{R}^{2n}$ for every $T>0$.
\end{assumption}
Note that, combined with Assumptions \ref{assump1}-\ref{assump4}, this implies $\tilde\gamma$, $\tilde\gamma^{-1}$, $\partial_t\tilde\gamma^{-1}$, $\partial_{q^i}\tilde\gamma^{-1}$, $\partial_t\partial_{q^j}\tilde\gamma^{-1}$, and  $\partial_{q^i}\partial_{q^j}\tilde\gamma^{-1}$ are bounded on compact $t$-intervals.

\begin{lemma}\label{q_bounded_lemma}
Under Assumptions \ref{assump1} and \ref{assump2}, for any $T>0$, $p>0$ we have
\begin{align}
E\left[\sup_{t\in[0,T]}\|q_t^\epsilon\|^p\right]<\infty.
\end{align}
\end{lemma}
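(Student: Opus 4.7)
The plan is to reduce the moment bound on $\|q^\epsilon_t\|$ to a moment bound on the energy $K(t, q^\epsilon_t, z^\epsilon_t)$, and then prove the latter via an It\^o energy estimate that exploits a cancellation arising from the Hamiltonian structure.

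First, since $dq^\epsilon_t = \epsilon^{-1/2}\nabla_z K(t, q^\epsilon_t, z^\epsilon_t)\,dt$, we have
\[
\sup_{t\in[0,T]}\|q^\epsilon_t\| \leq \|q^\epsilon_0\| + \epsilon^{-1/2}\int_0^T \|\nabla_z K(s, q^\epsilon_s, z^\epsilon_s)\|\,ds.
\]
Raising to the $p$-th power (the case $p<1$ follows by Jensen from the case $p=1$), applying H\"older to the time integral, and using Assumption \ref{assump1}(3) with $\delta = 1$ to bound $\|\nabla_z K\|^p \leq 2^{p-1}(M^p + K^p)$, the claim reduces to showing $\sup_{t\in[0,T]} E[K(t, q^\epsilon_t, z^\epsilon_t)^p] < \infty$ for each $p$. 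Finiteness of $E[\|q^\epsilon_0\|^p]$ is a standing convention on initial data.

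Second, I would derive an SDE for $K^\epsilon_t := K(t, q^\epsilon_t, z^\epsilon_t)$ via It\^o's formula. A direct computation starting from \req{Hamiltonian_SDE_q}--\req{Hamiltonian_SDE_p} yields
\[
dz^\epsilon_t = -\epsilon^{-1}\tilde\gamma\, \nabla_z K\,dt + \epsilon^{-1/2}\bigl(-\nabla_q K - \nabla_q V + F - \partial_t\psi\bigr)dt + \epsilon^{-1/2}\sigma\,dW_t,
\]
with $\tilde\gamma$ as in \req{tilde_gamma_def}. The characteristic Hamiltonian cancellation now occurs: the term $\nabla_q K \cdot dq^\epsilon_t = \epsilon^{-1/2}\nabla_q K\cdot \nabla_z K\,dt$ appearing in $dK^\epsilon_t$ exactly cancels the $-\epsilon^{-1/2}\nabla_z K\cdot \nabla_q K\,dt$ contribution inside $\nabla_z K \cdot dz^\epsilon_t$, while the antisymmetric part of $\tilde\gamma$ drops out of $\nabla_z K\cdot\tilde\gamma\,\nabla_z K$. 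What remains is
\begin{align*}
dK^\epsilon_t =&\, \partial_t K\,dt - \epsilon^{-1}\nabla_z K\cdot\gamma\,\nabla_z K\,dt + \epsilon^{-1/2}\nabla_z K\cdot(F - \nabla_q V - \partial_t\psi)\,dt \\
&+ \tfrac{1}{2\epsilon}\Sigma_{ij}\,\partial_{z_i}\partial_{z_j} K\,dt + \epsilon^{-1/2}\nabla_z K\cdot \sigma\,dW_t.
\end{align*}
The dissipative term is bounded above by $-\epsilon^{-1}\lambda c K + \epsilon^{-1}\lambda M$ via the eigenvalue bound on $\gamma$ from Assumption \ref{assump2}(2) and the coercivity bound $\|\nabla_z K\|^2 \geq cK - M$ from Assumption \ref{assump1}(2). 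All other drift terms are bounded by $M' + (C' + \delta)K$ for arbitrarily small $\delta > 0$, using Assumptions \ref{assump1}(1), \ref{assump1}(3) and the boundedness of $\sigma$, $F$, $\nabla_q V$, $\partial_t\psi$ from Assumption \ref{assump2}.

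Third, to extract $p$-th moments I would apply It\^o to $(1+K^\epsilon_t)^p$. The second-order It\^o correction $\tfrac{p(p-1)}{2}(1+K)^{p-2}\epsilon^{-1}\nabla_z K\cdot \Sigma\,\nabla_z K\,dt$ has the same $\|\nabla_z K\|^2$ structure as the dissipative term but one power lower in $K$, and so for $K$ sufficiently large is dominated by the product of the dissipative term with $p(1+K)^{p-1}$. Choosing $\delta$ small, this produces
\[
d(1 + K^\epsilon_t)^p \leq \bigl(A_\epsilon - B_\epsilon(1+K^\epsilon_t)^p\bigr)dt + d\mathcal{M}^\epsilon_t
\]
for positive constants $A_\epsilon, B_\epsilon$ and $\mathcal{M}^\epsilon$ a local martingale. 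Combined with $K^\epsilon(0,x^\epsilon_0) \leq C$ from Assumption \ref{assump2}(4), a localization and Gronwall argument yields $\sup_{t\in[0,T]} E[(1 + K^\epsilon_t)^p] < \infty$. A Burkholder--Davis--Gundy estimate applied to $\mathcal{M}^\epsilon$, whose quadratic variation is polynomially bounded in $K$ and hence controlled by the expectation bound just obtained, upgrades this to $E[\sup_{t\in[0,T]}(1 + K^\epsilon_t)^p] < \infty$, closing the argument.

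The main obstacle is the third step: one must match coefficients carefully so that the dominant $-\epsilon^{-1}\lambda c p(1+K)^{p-1}K$ contribution from $p f'(K)\,dK$ genuinely dominates both the $\epsilon^{-1}\delta(1+K)^{p-1}K$ errors from Assumption \ref{assump1}(3) and the nonnegative $\tfrac{p(p-1)}{2}\epsilon^{-1}(1+K)^{p-2}\|\nabla_z K\|^2$ term from the It\^o correction, uniformly in $K \geq 0$. Once this coefficient bookkeeping is in place the remaining steps are routine.
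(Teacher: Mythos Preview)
The paper does not prove this lemma; it is quoted in Appendix~A as a result from \cite{BirrellHomogenization}, so there is no in-paper proof to compare against. Your approach is the natural one and almost certainly coincides with the argument in the cited reference: control $q^\epsilon$ through the time integral of $\|\nabla_z K\|$, reduce via Assumption~\ref{assump1}(3) to moment bounds on $K$, and obtain those by an It\^o energy estimate on $(1+K)^p$ that exploits the Hamiltonian cancellation and the dissipative term $-\epsilon^{-1}\nabla_z K\cdot\gamma\,\nabla_z K$.

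Your computations check out. The SDE for $z^\epsilon_t$ and the resulting expression for $dK^\epsilon_t$ are correct, the cancellation of $\nabla_q K\cdot\nabla_z K$ is genuine, the antisymmetric part of $\tilde\gamma$ indeed drops from the quadratic form, and your coefficient analysis for $(1+K)^p$ is right: the It\^o correction $\tfrac{p(p-1)}{2\epsilon}(1+K)^{p-2}\,\nabla_z K\cdot\Sigma\,\nabla_z K$ is beaten by $p\epsilon^{-1}\lambda(1+K)^{p-1}\|\nabla_z K\|^2$ once $1+K$ exceeds $\tfrac{(p-1)\|\Sigma\|}{2\lambda}$, and the residual small-$K$ contribution is absorbed into the constant $A_\epsilon$. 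Since the lemma asks only for finiteness at each fixed $\epsilon$, you may choose $\delta$ in Assumption~\ref{assump1}(3) depending on $\epsilon$, which is what makes the $\tfrac{C''\delta}{2\epsilon}K$ term from the Hessian bound harmless.

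One small point to flag: you invoke finiteness of $E[\|q^\epsilon_0\|^p]$ as a ``standing convention,'' but Assumptions~\ref{assump1}--\ref{assump2} do not state this; A2(4) bounds only $K^\epsilon(0,x_0^\epsilon)$. In \cite{BirrellHomogenization} this is presumably either an explicit hypothesis or the initial data are taken deterministic. It does not affect the substance of your argument, but you should state it as an assumption rather than a convention.
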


\begin{lemma}\label{p_decay_lemma}
Under Assumptions \ref{assump1}-\ref{assump3}, for any $T>0$, $p>0$ we have
\begin{align}
\sup_{t\in[0,T]}E[\|p_t^\epsilon-\psi(t,q_t^\epsilon)\|^{p}]=O(\epsilon^{p/2}) \text{ as }\epsilon\rightarrow 0^+
\end{align}
and for any $p>0$,  $T>0$, $0<\beta<p/2$ we have
\begin{align}
E\left[\sup_{t\in[0,T]}\|p_t^\epsilon-\psi(t,q_t^\epsilon)\|^{p}\right]=O(\epsilon^\beta) \text{ as }\epsilon\rightarrow 0^+.
\end{align}
\end{lemma}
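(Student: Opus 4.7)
The natural Lyapunov function is the scaled kinetic energy $\mathcal{E}^\epsilon_t := K^\epsilon(t,x^\epsilon_t) = K(t,q^\epsilon_t,z^\epsilon_t)$, where $z^\epsilon_t = (p^\epsilon_t-\psi(t,q^\epsilon_t))/\sqrt{\epsilon}$. By Assumption \ref{assump3}, $\|z^\epsilon_t\|^p \leq C\mathcal{E}^{\epsilon,p/(2\eta)}_t$, so $L^p$-control of $\mathcal{E}^\epsilon$ transfers directly to $L^p$-control of $p^\epsilon_t-\psi(t,q^\epsilon_t) = \sqrt{\epsilon}\,z^\epsilon_t$. I first apply It\^o's formula to $\mathcal{E}^\epsilon$. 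The Hamiltonian structure produces a crucial cancellation: the $\tfrac{1}{\sqrt{\epsilon}}(\partial_{q^i}K)(\partial_{z_i}K)$ terms coming from $(\partial_{q^i}K^\epsilon)dq^{\epsilon,i}$ and $(\partial_{p_i}K^\epsilon)dp^\epsilon_i$ exactly cancel, as do the $\tfrac{1}{\epsilon}\partial_{q^i}\psi_k\partial_{z_i}K\partial_{z_k}K$ cross terms, leaving
\[
d\mathcal{E}^\epsilon = \Bigl[\partial_t K - \tfrac{1}{\sqrt{\epsilon}}(\partial_t\psi_k+\partial_{q^k}V-F_k)\partial_{z_k}K - \tfrac{1}{\epsilon}\gamma_{ij}\partial_{z_i}K\partial_{z_j}K + \tfrac{1}{2\epsilon}\Sigma_{ij}\partial_{z_i}\partial_{z_j}K\Bigr]dt + \tfrac{1}{\sqrt{\epsilon}}\sigma_{jk}\partial_{z_j}K\,dW^k.
\]
Using $\gamma\geq\lambda I$ (Assumption \ref{assump2}) with Assumption \ref{assump1}(ii) yields a dissipation $-\gamma_{ij}\partial_{z_i}K\partial_{z_j}K \leq -c\lambda\mathcal{E}^\epsilon + \lambda M$; Assumption \ref{assump1}(i),(iii) bound the remaining drift terms by $M+\delta\mathcal{E}^\epsilon$ for an arbitrarily small $\delta>0$. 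Absorbing the $\delta$-terms via Young's inequality, for all sufficiently small $\epsilon$,
\[
d\mathcal{E}^\epsilon \leq \Bigl[-\tfrac{\alpha}{\epsilon}\mathcal{E}^\epsilon + \tfrac{C}{\epsilon}\Bigr]dt + d\mathcal{N}^\epsilon_t, \qquad d\langle\mathcal{N}^\epsilon\rangle_t \leq \tfrac{C(1+\mathcal{E}^\epsilon_t)^2}{\epsilon}\,dt.
\]

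\textbf{First estimate.} For any $q\geq 1$, apply It\^o's formula to $(\mathcal{E}^\epsilon+1)^q$. The It\^o correction $\tfrac{q(q-1)}{2}(\mathcal{E}^\epsilon+1)^{q-2}d\langle\mathcal{N}^\epsilon\rangle$ contributes at most $Cq^2\delta^2(\mathcal{E}^\epsilon+1)^q/\epsilon$, and (by choosing $\delta$ in Assumption \ref{assump1}(iii) small enough, depending on $q$) is dominated by the dissipation. This produces
\[
d(\mathcal{E}^\epsilon+1)^q \leq \Bigl[-\tfrac{\alpha_q}{\epsilon}(\mathcal{E}^\epsilon+1)^q + \tfrac{C_q}{\epsilon}\Bigr]dt + d\tilde{\mathcal{N}}^{\epsilon,q}_t.
\]
Taking expectation and comparing with the linear ODE $y' = -\alpha_q y/\epsilon + C_q/\epsilon$, combined with the uniform bound $\mathcal{E}^\epsilon_0 \leq C$ from Assumption \ref{assump2}(iv), yields $\sup_{t\in[0,T]} E[(\mathcal{E}^\epsilon_t+1)^q] \leq C'_q$ uniformly in $\epsilon>0$. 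Taking $q = p/(2\eta)$ and invoking Assumption \ref{assump3} gives the first claim.

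\textbf{Second estimate.} Integrating the It\^o inequality for $(\mathcal{E}^\epsilon+1)^q$ and dropping the negative dissipation leaves
\[
\sup_{t\in[0,T]}(\mathcal{E}^\epsilon_t+1)^q \leq (\mathcal{E}^\epsilon_0+1)^q + \tfrac{C_q T}{\epsilon} + \sup_{t\in[0,T]}\bigl|\tilde{\mathcal{N}}^{\epsilon,q}_t\bigr|.
\]
Raising to an $r$-th moment ($r>0$) and applying Burkholder--Davis--Gundy to $\tilde{\mathcal{N}}^{\epsilon,q}$ -- whose quadratic variation is bounded by $Cq^2(1+\mathcal{E}^\epsilon)^{2q}/\epsilon$, with moments controlled by the first estimate applied to the exponent $qr$ -- gives $E\bigl[\bigl(\sup_t(\mathcal{E}^\epsilon_t+1)^q\bigr)^r\bigr] = O(\epsilon^{-r})$. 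Using $(\sup_t(\mathcal{E}^\epsilon_t+1)^q)^r = \sup_t(\mathcal{E}^\epsilon_t+1)^{qr}$, this reads $E\bigl[\sup_t(\mathcal{E}^\epsilon_t+1)^{qr}\bigr] = O(\epsilon^{-r})$. For any target exponent $m$ and any $\delta>0$, choose $q = m/\delta$ and $r = \delta$ to obtain $E[\sup_t(\mathcal{E}^\epsilon_t+1)^m] = O(\epsilon^{-\delta})$. Taking $m = p/(2\eta)$ and using Assumption \ref{assump3},
\[
E\bigl[\sup_{t\in[0,T]}\|p^\epsilon_t-\psi(t,q^\epsilon_t)\|^p\bigr] = \epsilon^{p/2}\,E\bigl[\sup_t\|z^\epsilon_t\|^p\bigr] \leq C_\delta\,\epsilon^{p/2-\delta},
\]
which is the claim with $\beta = p/2-\delta$.

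The chief obstacle is the naive $C_qT/\epsilon$ term appearing when the supremum is pulled through: pathwise one cannot exploit the dissipation, so the crude bound yields only $O(\epsilon^{-1})$ for $E[\sup_t\mathcal{E}^{\epsilon,q}]$, corresponding to the suboptimal $\beta\leq p/2-1$. The bootstrap trick of inflating the exponent from $m$ to $q=m/\delta$ and paying a matching small power $\epsilon^{-\delta}$ is what restores any $\beta<p/2$; this gap is also the reason the sharper $\beta = p/2$ is not claimed.
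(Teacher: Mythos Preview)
The paper does not contain its own proof of this lemma: Lemma~\ref{p_decay_lemma} is quoted in Appendix~\ref{app:assump} as ``Material from \cite{BirrellHomogenization}'' and is stated without argument. So there is no in-paper proof to compare against directly.

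That said, your proposal is sound and is the natural Lyapunov approach one would expect the cited reference to use. The It\^o computation for $\mathcal{E}^\epsilon_t=K(t,q^\epsilon_t,z^\epsilon_t)$ is correct, including the Hamiltonian cancellation of the $\epsilon^{-1/2}\partial_{q^i}K\,\partial_{z_i}K$ and $\epsilon^{-1}\partial_{q^i}\psi_k\,\partial_{z_i}K\,\partial_{z_k}K$ cross terms. The way you absorb the $\epsilon^{-1/2}$ forcing term into the dissipation via Young's inequality, and the $\epsilon^{-1}\Sigma_{ij}\partial_{z_i}\partial_{z_j}K$ term via Assumption~\ref{assump1}(iii) with a $q$-dependent choice of $\delta$, is the right mechanism. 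The first estimate then follows from the Gronwall/comparison argument for $E[(\mathcal{E}^\epsilon+1)^q]$, together with Assumption~\ref{assump2}(iv) on the initial data. For the second estimate, your bootstrap---inflating the exponent to $q=m/\delta$, paying $O(\epsilon^{-r})$ with $r=\delta$, and using BDG (valid for all positive moments of continuous local martingales) with the quadratic-variation moments supplied by the first estimate---is a clean and correct way to recover any $\beta<p/2$. Two minor technical points worth making explicit in a full write-up: a localization by stopping times is needed to justify dropping the local-martingale term when taking expectations, and for $r/2\le 1$ the bound $E[\langle\tilde{\mathcal N}^{\epsilon,q}\rangle_T^{r/2}]\le (E[\langle\tilde{\mathcal N}^{\epsilon,q}\rangle_T])^{r/2}$ follows from Jensen's inequality, after which the first estimate (at exponent $2q$) controls $E[\langle\tilde{\mathcal N}^{\epsilon,q}\rangle_T]$.
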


The following is a slight variant of the result from \cite{BirrellHomogenization}, but the proof is identical.
\begin{lemma}\label{conv_lemma}
Let $T>0$ and suppose we have continuous functions $\tilde F(t,x):[0,\infty)\times\mathbb{R}^{n+m}\rightarrow\mathbb{R}^n$, $\tilde \sigma(t,x):[0,\infty)\times\mathbb{R}^{n+m}\rightarrow\mathbb{R}^{n\times k}$, and  $\psi:[0,\infty)\times\mathbb{R}^n\rightarrow\mathbb{R}^m$ that are Lipschitz in $x$, uniformly in $t\in[0,T]$.

Let $W_t$ be a $k$-dimensional Wiener process, $p\geq 2$ and $\beta>0$ and suppose that we have continuous semimartingales $q_t$ and, for each $0<\epsilon\leq\epsilon_0$, $\tilde R_t^\epsilon$, $x_t^\epsilon=(q_t^\epsilon,p_t^\epsilon)$ that satisfy the following properties:
\begin{enumerate}
\item $q_t^\epsilon=q_0^\epsilon+\int_0^t\tilde F(s,x_s^\epsilon)ds+\int_0^t\tilde\sigma(s,x_s^\epsilon)dW_s+\tilde R^\epsilon_t$.\label{q_eps_eq_assump}
\item $q_t=q_0+\int_0^t\tilde F(s,q_s,\psi(s,q_s))ds+\int_0^t\tilde\sigma(s,q_s^\epsilon,\psi(s,q_s))dW_s$.\label{q_eq_assump}
\item $E[\|q_0^\epsilon-q_0\|^p]=O(\epsilon^\beta)\text{ as }\epsilon\rightarrow 0^+$. \label{ics_assump}
\item $E\left[\sup_{t\in[0,T]}\|\tilde R_t^\epsilon\|^p\right]=O(\epsilon^\beta)\text{ as }\epsilon\rightarrow 0^+$.\label{R_decay_assump}
\item $\sup_{t\in[0,T]}E[\|p_t^\epsilon-\psi(t,q_t^\epsilon)\|^p]=O(\epsilon^\beta)\text{ as }\epsilon\rightarrow 0^+$.\label{p_decay_assump}
\item $E\left[\sup_{t\in[0,T]}\|q_t^\epsilon\|^p\right]<\infty$ for all $\epsilon>0$ sufficiently small.\label{q_eps_integrable_assump}
\item $E\left[\sup_{t\in[0,T]}\|q_t\|^p\right]<\infty$.\label{q_integrable_assump}
\end{enumerate}

Then 
\begin{align}
E\left[\sup_{t\in[0,T]}\|q_t^\epsilon-q_t\|^p\right]=O(\epsilon^\beta)\text{ as }\epsilon\rightarrow 0^+.
\end{align}

\end{lemma}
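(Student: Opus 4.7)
The plan is a standard strong-convergence Gronwall argument for SDEs with a vanishing perturbation, where the only subtlety is how to propagate the $p$-variable estimate from Assumption (5) through the time integration. I would start by subtracting the equations in conditions (\ref{q_eps_eq_assump}) and (\ref{q_eq_assump}) to obtain
\begin{align*}
q_t^\epsilon - q_t =\,& (q_0^\epsilon - q_0) + \int_0^t\left[\tilde F(s,x_s^\epsilon) - \tilde F(s,q_s,\psi(s,q_s))\right]ds\\
& + \int_0^t\left[\tilde\sigma(s,x_s^\epsilon) - \tilde\sigma(s,q_s,\psi(s,q_s))\right]dW_s + \tilde R_t^\epsilon.
\end{align*}
Then take $\sup_{s\in[0,t]}$, raise to the $p$-th power, use $(\sum_{i=1}^4 a_i)^p\leq 4^{p-1}\sum_{i=1}^4 a_i^p$, and take expectations. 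The initial-condition and remainder contributions are directly $O(\epsilon^\beta)$ by conditions (\ref{ics_assump}) and (\ref{R_decay_assump}).

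Next I would bound the two integral terms. For the Lebesgue integral, H\"older's inequality pulls the $p$-th power inside; for the stochastic integral, the Burkholder-Davis-Gundy inequality (as used in the proof of Theorem \ref{homog_thm}) followed by H\"older does the same. In both cases the Lipschitz properties of $\tilde F$ and $\tilde\sigma$ reduce the integrand to a constant multiple of $\|x_s^\epsilon - (q_s,\psi(s,q_s))\|^p$, which splits via
\begin{align*}
\|x_s^\epsilon - (q_s,\psi(s,q_s))\|^p \leq 2^{p-1}\left(\|q_s^\epsilon - q_s\|^p + \|p_s^\epsilon - \psi(s,q_s)\|^p\right),
\end{align*}
and then
\begin{align*}
\|p_s^\epsilon - \psi(s,q_s)\|^p \leq 2^{p-1}\left(\|p_s^\epsilon - \psi(s,q_s^\epsilon)\|^p + L_\psi^p\|q_s^\epsilon - q_s\|^p\right),
\end{align*}
with $L_\psi$ the uniform Lipschitz constant of $\psi$. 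Combining these estimates yields
\begin{align*}
E\!\left[\sup_{r\in[0,t]}\|q_r^\epsilon - q_r\|^p\right] \leq O(\epsilon^\beta) + \tilde C\!\int_0^t\! E\!\left[\sup_{r\in[0,s]}\|q_r^\epsilon - q_r\|^p\right]\!ds + \tilde C\!\int_0^t\!\sup_{r\in[0,s]}E\!\left[\|p_r^\epsilon - \psi(r,q_r^\epsilon)\|^p\right]\!ds,
\end{align*}
where the last integral is $O(\epsilon^\beta)$ by condition (\ref{p_decay_assump}).

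Finally, conditions (\ref{q_eps_integrable_assump}) and (\ref{q_integrable_assump}) guarantee $E[\sup_{r\in[0,t]}\|q_r^\epsilon - q_r\|^p]<\infty$ on $[0,T]$, which is exactly what is needed to legitimately subtract and apply Gronwall's inequality. Gronwall then gives $E[\sup_{s\in[0,T]}\|q_s^\epsilon - q_s\|^p]\leq \tilde C e^{\tilde C T}\cdot O(\epsilon^\beta) = O(\epsilon^\beta)$. The point requiring the most care is that condition (\ref{p_decay_assump}) furnishes only $\sup_t E[\cdot]$ and not $E[\sup_t \cdot]$; this weaker form is sufficient precisely because H\"older's inequality has already moved the $p$-th power inside the time integral in both the Lebesgue and stochastic integral estimates, so only $\sup_{r\in[0,s]}E[\|p_r^\epsilon - \psi(r,q_r^\epsilon)\|^p]$ appears in the final bound. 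The proof for arbitrary $p>0$ at the end (as in Theorem \ref{homog_thm}) is not needed here since the hypothesis fixes $p\geq 2$.
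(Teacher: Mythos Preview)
Your argument is correct and is the standard strong-convergence Gronwall approach; the paper does not actually supply a proof of this lemma but defers to \cite{BirrellHomogenization}, and your subtract--H\"older/BDG--Lipschitz--split--Gronwall scheme is exactly the expected one. In particular, you have correctly identified and handled the only delicate point: that condition~(\ref{p_decay_assump}) provides only $\sup_t E[\cdot]$, which suffices because the $p$-th power has already been moved inside the time integral before the $p$-variable term appears.
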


\section{Polynomial boundedness of $\tilde\chi$}\label{app:poly_bound}
Changing variables, $\tilde\chi$ can be expressed as
\begin{align}
\tilde\chi(t,q,\zeta)=&\frac{1}{2b_1(t,q)}\zeta\int_0^1\exp[\beta(t,q) \tilde K(t,q,s\zeta)]\\
&\times \int_0^{1} r^{(m-2)/2}\exp[-\beta(t,q) \tilde K(t,q,rs\zeta)]\left(G(t,q,rs\zeta)-\tilde G(t,q)\right)dr ds.\notag
\end{align}
Applying the DCT to this expression several times, one can prove that $\tilde\chi$ is $C^{1,2}$. Using the fact that $\tilde K$ and $\partial_{q^i}\tilde K$ are polynomially bounded in $\zeta$, uniformly in $(t,q)\in[0,T]\times\mathbb{R}^n$, one can see that $\tilde\chi(t,q,\zeta)$ is bounded on $[0,T]\times\mathbb{R}^n\times[0,\zeta_0]$ for any $\zeta_0>0$.  From Assumption \ref{K_poly_bound_assump}, there exists $\zeta_0$ and $C>0$ such that $\tilde K^\prime(t,q,\zeta)\geq C$ for all $(t,q,\zeta)\in[0,T]\times\mathbb{R}^n\times[\zeta_0,\infty)$.  

By combining \req{tilde_chi_def} with \req{hamil_tilde_G_def}, one finds that for $\zeta\geq \zeta_0$, $\tilde \chi$ can alternatively be written as
\begin{align}
\tilde\chi(t,q,\zeta)=&\tilde\chi(t,q,\zeta_0)+\frac{1}{2b_1(t,q)}\int_{\zeta_0}^\zeta \zeta_1^{-m/2}\exp[\beta(t,q) \tilde K(t,q,\zeta_1)]\\
&\times \int_{\zeta_1}^\infty \exp[-\beta(t,q)\tilde K(t,q,\zeta_2)]\zeta_2^{(m-2)/2}(\tilde G(t,q)-G(t,q,\zeta_2))d\zeta_2 d\zeta_1\notag.
\end{align}
Therefore, if we can show that the second term has the polynomial boundedness property then so does $\tilde\chi$, and hence $\chi$.

Letting $\tilde C$ denote a constant that potentially changes in each line and choosing $\zeta_0$ as in Assumption \ref{K_poly_bound_assump}, we have
\begin{align}
&\bigg\|\frac{1}{2b_1(t,q)}\int_{\zeta_0}^\zeta \zeta_1^{-m/2}\exp[\beta(t,q) \tilde K(t,q,\zeta_1)]\\
&\times \int_{\zeta_1}^\infty \exp[-\beta(t,q)\tilde K(t,q,\zeta_2)]\zeta_2^{(m-2)/2}(\tilde G(t,q)-G(t,q,\zeta_2))d\zeta_2 d\zeta_1\bigg\|\notag\\
\leq &\tilde C\int_{\zeta_0}^\zeta \zeta_1^{-m/2}\exp[\beta(t,q) \tilde K(t,q,\zeta_1)]\notag\\
&\times \int_{\zeta_1}^\infty \exp[-\beta(t,q)\tilde K(t,q,\zeta_2)]\zeta_2^{(m-2)/2+q} d\zeta_2 d\zeta_1\notag\\
\leq &\tilde C \zeta_0^{-m/2}\int_{\zeta_0}^\zeta \exp[\beta(t,q) \tilde K(t,q,\zeta_1)]\notag\\
&\times \int_{\zeta_1}^\infty \exp[-\beta(t,q)\tilde K(t,q,\zeta_2)] \tilde K(t,q,\zeta_2)^{((m-2)/2+q)/\eta} \tilde K^\prime(t,q,\zeta_2) d\zeta_2 d\zeta_1\notag\\
=&\tilde C \zeta_0^{-m/2}\int_{\zeta_0}^\zeta \exp[\beta(t,q) \tilde K(t,q,\zeta_1)]\notag\\
&\hspace{1.5cm}\times\int_{\tilde K(t,q,\zeta_1)}^\infty \exp[-\beta(t,q)u]u^{((m-2)/2+q)/\eta} dud\zeta_1\notag
\end{align}
for some $q>0$. To obtain the first inequality, we use polynomial boundedness of $\partial_{q^i}\tilde K$.  For the second, we used Assumption \req{assump3} together with the fact that $\tilde K^\prime\geq C>0$ on $[0,T]\times\mathbb{R}^n\times[\zeta_0,\infty)$.

Therefore we obtain
\begin{align}
\|\tilde\chi(t,q,\zeta)\|\leq\tilde C\left(1+\int_{\zeta_0}^\zeta P(\tilde K(t,q,\zeta_1))d\zeta_1\right)
\end{align}
for some  polynomial $P(x)$ with positive coefficients that are independant of $t$ and $q$. Polynomial boundedness of $\tilde K$ then implies
\begin{align}
\|\tilde\chi(t,q,\zeta)\|\leq\tilde C\left(1+ \int_{\zeta_0}^\zeta Q(\zeta_1)d\zeta_1\right)
\end{align}
for some polynomial $Q(\zeta)$.  This proves the desired polynomial boundedness  property for $\tilde\chi$.

\subsection*{Acknowledgments}

J.W. was partially supported by NSF grants DMS 131271 and DMS 1615045.\\

\bibliographystyle{ieeetr}

\end{document}